\documentclass[journal]{IEEEtran}
\usepackage{amsmath,amssymb,amsfonts}
\usepackage{algorithm}
\usepackage{array}
\usepackage{amsthm}
\usepackage[caption=false,font=normalsize,labelfont=sf,textfont=sf]{subfig}
\usepackage{textcomp}
\usepackage{stfloats}
\usepackage{url}
\usepackage{verbatim}
\usepackage{graphicx}
\usepackage{tabularx}
\usepackage{cite}
\usepackage{bm}

\usepackage{xcolor}
\usepackage{soul}
\usepackage{hyperref}
\usepackage{multirow}
\usepackage{multicol}
\usepackage{booktabs}
\usepackage[utf8x]{inputenc}
\usepackage{pgf}

\DeclareMathOperator*{\argmin}{argmin}

\newcommand{\mathbi}[1]{\bm{\mathit{#1}}}

\newcommand{\until}[1]{\mathbi{U}_{#1}}
\newcommand{\eventually}[1]{\Diamond_{#1}}
\newcommand{\always}[1]{\square_{#1}}

\newtheorem{theorem}{Theorem}
\newtheorem{corollary}{Corollary}

\newtheorem{proposition}{Proposition}

\newcommand{\norm}[1]{\left\lVert#1\right\rVert}

\newcommand{\real}{\mathbb{R}}
\newcommand{\integer}{\mathbb{Z}}
\newcommand{\z}{\mathbf{z}}
\newcommand{\x}{\mathbf{x}}
\renewcommand{\u}{\mathbf{u}}

\newcommand{\w}{\mathbf{w}}
\newcommand{\windvec}{\mathbf{v}}
\renewcommand{\v}{\mathbf{v}}

\newcommand{\rvec}{\mathbf{r}}
\newcommand{\tlleft}{[\![}
\newcommand{\tlright}{]\!]}
\newcommand{\tlleftopen}{(\!|}
\newcommand{\tlrightopen}{|\!)}

\newcommand{\windowlength}{\tau}

\begin{document}

\title{Motion Planning with Metric Temporal Logic Using Reachability Analysis and Hybrid Zonotopes}

\author{Andrew F. Thompson, Joshua A. Robbins, Jonah J. Glunt, Sean B. Brennan, and Herschel C. Pangborn
\thanks{Andrew F. Thompson, Joshua A. Robbins, Jonah J. Glunt, Sean B. Brennan, and Herschel C. Pangborn are with the Department of Mechanical Engineering, The Pennsylvania State University, University Park, PA 16802 USA (e-mail: {\tt\small thompson@psu.edu, jrobbins@psu.edu, jglunt@psu.edu, sbrennan@psu.edu, hcpangborn@psu.edu}).}
}

\maketitle

\begin{abstract}
Metric temporal logic (MTL) provides a formal framework for defining time-dependent mission requirements on autonomous vehicles. However, optimizing control decisions subject to these constraints is often computationally expensive. This article presents a method that uses reachability analysis to implicitly express the set of states satisfying an MTL specification and then optimizes to find a motion plan. The hybrid zonotope set representation is used to efficiently and conveniently encode MTL specifications into reachable sets. A numerical benchmark highlights the proposed method's computational advantages as compared to existing methods in the literature. Further numerical examples and an experimental application demonstrate the ability to address time-varying environments, region-dependent disturbances, and multi-agent coordination.
\end{abstract}

\section{Introduction}
Emerging applications of autonomous vehicles often require satisfaction of tasks or specifications beyond point-to-point navigation. Examples include self-driving cars~\cite{edviken2021safety}, package delivery drones~\cite{baran2021ros}, robots for inspection in dangerous or difficult to reach environments~\cite{silano2021power}, and urban air mobility vehicles~\cite{rodionova2020learning}, which all come with complex safety constraints and performance objectives that must be considered in path planning. Temporal logic (TL) provides a formal representation by which to express these requirements as time-dependent constraints. Metric temporal logic (MTL), signal temporal logic (STL), and other extensions of linear temporal logic (LTL) are particularly well-suited for autonomous systems, as they incorporate an explicit notion of time into their libraries of operators.
A key challenge in the use of TL for motion planning is to encode these specifications in a computational framework that facilitates efficient and scalable solution of compliant trajectories. 

\subsection{Gaps in the Literature}
Automata-based methods are commonly employed for control synthesis under TL specifications~\cite{Belta2019,Kurtz2023,Kalluraya2023}. Typically, this involves solving a graph search in the product of a finite-state abstraction of the system's state space with a finite-state automaton representing the TL specification~\cite{Belta2019, Kurtz2023}. While these methods can be efficient to implement for certain specifications~\cite{Belta2017}, the abstraction of system dynamics and TL specifications might yield a trajectory that is infeasible or suboptimal. Other automaton-based approaches have overcome these limitations by using graphs of convex sets to solve for optimal trajectories under LTL specifications~\cite{Kurtz2023}.

Another popular approach is to use optimization-based methods which incorporate a mixed-integer program (MIP). These methods do not require a state-space abstraction, so the resulting trajectory is dynamically feasible. MIPs have been used for control under LTL~\cite{Wolff2014, Luo2022}, MTL~\cite{Kurtz2022a, zhou2015optimal}, STL~\cite{Kurtz2022, Yu2023, Buyukkocak2021, Belta2019}, and other variations of TL~\cite{Cardona2023}. In these MIPs, binary variables and additional constraints track and enforce the desired TL specification. Optimization approaches are guaranteed to converge to a global optimum at a potentially great computational cost~\cite{Belta2019}. The worst-case performance of MIPs grows exponentially with respect to the number of binary variables and is also greatly affected by the tightness of the formulation's convex relaxation (i.e., how much the feasible space of the optimization program grows when binary variables are relaxed to take continuous values). The standard approaches for encoding TL specifications in MIPs introduce many binary variables and use the so-called ``\mbox{big M}'' method that produces notoriously loose convex relaxations and can hinder branch and bound solvers~\cite{Bonami2015}.

Finally, there are set-based methods that use reachability analysis of dynamic systems to verify or falsify temporal logic specifications, as applied to autonomous driving in~\cite{Arfvidsson2024, Arfvidsson2024towards, Jiang2024, Hadjiloizou2024}. Some of these approaches include a combination of a temporal logic tree with backwards reachability of the system, beginning at the atomic propositions of the specification and terminating when the set reaches invariance~\cite{Jiang2024,Hadjiloizou2024}. These methods either use Hamilton-Jacobi reachability~\cite{Jiang2024}, where exact solutions can scale exponentially with the state dimension~\cite{bansal2017hamilton} and scalable algorithms can produce approximate solutions~\cite{sharpless2023koopman}, or require complements of hybrid zonotope set representations~\cite{Hadjiloizou2024}, which introduce significant computational complexity~\cite{Bird2022}. Hybrid zonotopes have also been used to represent a subset of STL, leveraging graphs of convex sets specifically for door and key maze-like problems~\cite{you2025}.

\subsection{Contributions}
This article presents an approach for autonomous vehicle motion planning under MTL specifications. The hybrid zonotope set representation is leveraged to compute forward reachable sets that satisfy the system dynamics, state and input constraints, and MTL specifications. These sets describe the feasible space of a MIP, which is solved to generate a motion plan. 
The key contribution is a method for encoding MTL specifications within hybrid zonotopes using straightforward set identities that do not rely on the big M method. This often leads to an MIP of lower complexity (e.g., an order of magnitude fewer binary variables) than state-of-the-art approaches for expressing MTL specifications as MIPs. 
Additionally, we introduce a method to construct constraints that represent the temporal ``until" operator with fewer disjunctions than in the literature. 
The proposed approach is demonstrated in a variety of numerical and experimental examples. Benchmarking with a door-key problem from the literature highlights its improved representational and computational performance. Further examples highlight the ability to address time-varying environments, region-dependent disturbances, and multi-agent coordination.

\subsection{Outline}
The remainder of this paper is organized as follows. Sec.~\ref{sec:prelims} provides preliminary material, including notation, set representations, and the syntax of MTL. Sec.~\ref{sec:methods} describes how we represent reachable sets and obstacle/goal maps using hybrid zonotopes, and presents the main results for encoding MTL specifications in hybrid zonotope reachable sets. Sec.~\ref{sec:examples} provides numerical examples to demonstrate and benchmark the proposed method. An experimental implementation is given in Sec.~\ref{sec:experiment} and concluding remarks are provided in Sec.~\ref{sec:conclusion}.

\section{Preliminaries} \label{sec:prelims}
\subsection{Notation}
Vectors are denoted with boldface letters. Matrices are denoted with capital letters. The identity matrix of size $n$ is denoted by $I_n$. Sets are denoted with calligraphic letters or with curly braces $\{\cdot\}$. The interior of a set $\mathcal{X}$ 
is denoted as $\mathcal{X}^\circ$. The power set, or the set of all possible subsets, of $\mathcal{X}$ is denoted as $2^{\mathcal{X}}$. The empty set is denoted as $\emptyset$. Brackets $(a,b)$ and $[a,b]$ denote the open and closed intervals from $a$ to $b$, respectively. Double brackets $\tlleftopen a, b \tlrightopen$ and $\tlleft a, b \tlright$ denote the open and closed intervals of integers from $a$ to $b$, respectively, i.e., $\tlleft a, b \tlrightopen \triangleq [a, b) \cap \integer$. The squared norm of a vector with respect to a matrix is $\norm{\x}^2_A \triangleq \x^T A \x$. 

\subsection{Set Representations and Operations} \label{sec:hybzono-definition}
A set $\mathcal{Z} \subset \real^n$ is a zonotope if $\exists \; G^c \in \real^{n \times n_g}$ and $\bm{c} \in \real^{n}$ such that
\begin{equation} \label{eq:zonotope}
\mathcal{Z} = \left\{ G^c \bm{\xi}^c + \bm{c} \mid \bm{\xi}^c \in [0,1]^{n_g} \right\} \;.
\end{equation}
Zonotopes are convex, centrally symmetric sets \cite{ziegler2012lectures}.

A set $\mathcal{Z}_C \subset \real^n$ is a constrained zonotope if $\exists \; G^c \in \real^{n \times n_g}$, $\bm{c} \in \real^{n}$, $A^c \in \real^{n_c \times n_g}$, and $\bm{b} \in \real^{n_c}$ such that
\begin{equation} \label{eq:cons_zonotope}
\mathcal{Z}_C = \left\{ G^c \bm{\xi}^c + \bm{c} \mid \bm{\xi}^c \in [0,1]^{n_g},\; A^c \bm{\xi}^c = \bm{b} \right\} \;.
\end{equation} 
Constrained zonotopes can represent any convex polytope \cite{scott2016constrained}.

Hybrid zonotopes extend \eqref{eq:cons_zonotope} by including binary factors $\bm{\xi}^b$. A set $\mathcal{Z}_H \subset \real^n$ is a hybrid zonotope if in addition to $G^c$, $\bm{c}$, $A^c$, and $\bm{b}$, $\exists\; G^b \in \real^{n \times n_b}$ and $A^b \in \real^{n_c \times n_b}$ such that
\begin{equation} \label{eq:hyb_zonotope}
\mathcal{Z}_H = \left\{ \begin{bmatrix} G^c & G^b \end{bmatrix} 
\begin{bmatrix} \bm{\xi}^c \\ \bm{\xi}^b \end{bmatrix} + \bm{c} \;\middle|\; 
\begin{matrix}
\begin{bmatrix} \bm{\xi}^c \\ \bm{\xi}^b \end{bmatrix} \in \mathcal[0,1]^{n_g} \times \{0,1\}^{n_b} \\
\begin{bmatrix} A^c & A^b \end{bmatrix} \begin{bmatrix} \bm{\xi}^c \\ \bm{\xi}^b \end{bmatrix} = \bm{b}
\end{matrix}
\right\} \;.
\end{equation}
Hybrid zonotopes can represent any union of polytopes \cite{bird2023hybrid}. 
While hybrid zonotopes were initially defined with generators ranging from $-1$ to $1$, in this paper we consider an equivalent representation that has generators ranging from $0$ to $1$, as in~\cite{robbins2024efficientjournal, Thompson2025}. Hybrid zonotopes are denoted using the shorthand notation $\mathcal{Z}_H = \left\langle G^c, G^b, \bm{c}, A^c, A^b, \bm{b} \right\rangle$.

A set $\mathcal{H} \subset \real^n_x$ is a halfspace representation (H-rep) polytope if it is bounded and $\exists \; L \in \real^{n_i \times n_x}$, $\bm{r} \in \real^{n_i}$, $A \in \real^{n_e \times n_x}$, and $\bm{b} \in \real^{n_e}$ such that 
\begin{equation} \label{eq:hrep}
    \mathcal{H} = \left\{ \x \in \real^{n_x} \mid L \x \leq \bm{r}\,, A \x = \bm{b}\right\}\,.
\end{equation}
This definition of a H-rep polytope includes an equality constraint for convenience in the main results. It is trivial to show that these equality constraints can be converted to a pair of inequality constraints, yielding a more conventional polytope definition. H-rep polytopes are simple to construct and have efficient identities for set intersection with other H-rep polytopes~\cite{Althoff2021} as well as with hybrid zonotopes~\cite{bird_dissertation}. H-rep polytopes are denoted using the shorthand notation $\mathcal{H} = \left\langle L, \bm{r}, A, \bm{b} \right\rangle_h$.

One major advantage of hybrid zonotopes and constrained zonotopes as compared to other polytopic set representations is that they have closed form and efficient expressions for key set operations. Given the sets $\mathcal{X}$, $\mathcal{Y} \subset \real^n$ and $\mathcal{Z} \subset \real^m$, the following set operations are defined as
\begin{subequations}
\begin{align}
    R \mathcal{X} + \mathbf{c} &= \left\{R \mathbf{x} + \mathbf{c} \;\middle|\; \mathbf{x} \in \mathcal{X}\right\} \label{eq:set-ops-affine}\,,\\
    \mathcal{X} \times \mathcal{Z} &= \left\{\begin{bmatrix}
        \mathbf{x} \\ \mathbf{z} 
    \end{bmatrix}\;\middle|\; \mathbf{x} \in \mathcal{X},\; \mathbf{z} \in \mathcal{Z}\right\} \label{eq:set-ops-cart-prod}\,,\\
    \mathcal{X} \oplus \mathcal{Y} &= \left\{\mathbf{x} + \mathbf{y} \;\middle|\; \mathbf{x} \in \mathcal{X},\; \mathbf{y} \in \mathcal{Y}\right\} \label{eq:set-ops-mink-sum}\,,\\
    \mathcal{X} \cap_R \mathcal{Z} &= \left\{\mathbf{x} \in \mathcal{X} \;\middle|\;  R \mathbf{x} \in \mathcal{Z}\right\} \label{eq:set-ops-intersection}\,,
\end{align}
\end{subequations}
where~\eqref{eq:set-ops-affine} is the affine map,~\eqref{eq:set-ops-cart-prod} is the Cartesian product,~\eqref{eq:set-ops-mink-sum} is the Minkowski sum, and~\eqref{eq:set-ops-intersection} is the generalized intersection. The representational complexities and time complexities of performing these operations in the case where $\mathcal{X}$, $\mathcal{Y}$, and $\mathcal{Z}$ are hybrid zonotopes is summarized in~\cite[Table 3.1]{bird_dissertation}. 
\begin{proposition}
The generalized intersection of a hybrid zonotope and an H-rep polytope as defined in~\eqref{eq:hrep} is given by
\begin{align}
\begin{split}
    &\mathcal{Z}_H \cap_R \mathcal{H} = \left<\begin{bmatrix}
        G_z^c & \mathbf{0}_{n \times n_i}
    \end{bmatrix}, G_z^b, \bm{c}, \right.\\
    &\left.\begin{bmatrix}
        A_z^c & \mathbf{0}_{n_c \times n_i}\\ L_h R G_z^c & \mathtt{diag}(\bm{s}) \\ A_h R G_z^c & \mathbf{0}_{n_e \times n_i}
    \end{bmatrix}, \begin{bmatrix}
        A_z^b \\ L_h R G_z^b \\ A_h R G_z^b
    \end{bmatrix}, \begin{bmatrix}
        \bm{b}_z \\ \bm{r}_h - L_h R \bm{c}_z\\ \bm{b}_h
    \end{bmatrix}\right> \,,
\end{split}
\end{align}
where 
\begin{equation}
    \bm{s} = \bm{r}_h - L_h R c_z + \sum_{i=1}^{n_g} \left|L_h R g_z^{c,i}\right| + \sum_{i=1}^{n_b} \left|L_h R g_z^{b,i}\right|\,.
\end{equation}    
and where $g_z^{c,i}$ and $g_z^{b,i}$ are the $i^{\text{th}}$ columns of $G^c$ and $G^b$, respectively.
\end{proposition}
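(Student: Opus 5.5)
The plan is to prove the set equality by double inclusion. The only non-routine ingredient is that the inequality rows $L_h R\x\le\bm{r}_h$ can be rewritten as equalities using continuous slack factors in $[0,1]$, scaled by the bound $\bm{s}$. Write a point of $\mathcal{Z}_H$ as $\x=G_z^c\bm{\xi}^c+G_z^b\bm{\xi}^b+\bm{c}_z$, with $(\bm{\xi}^c,\bm{\xi}^b)\in[0,1]^{n_g}\times\{0,1\}^{n_b}$ and $A_z^c\bm{\xi}^c+A_z^b\bm{\xi}^b=\bm{b}_z$. The proposed hybrid zonotope has continuous factors $(\bm{\xi}^c,\bm{\sigma})$ with $\bm{\sigma}\in[0,1]^{n_i}$. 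The appended zero columns in the generator matrix mean $\bm{\sigma}$ does not move the point, so the image point is still $\x$. Expanding the three block rows of the constraint gives
\begin{itemize}
\item the original constraints of $\mathcal{Z}_H$;
\item the relation $L_hR\x+\mathtt{diag}(\bm{s})\bm{\sigma}=\bm{r}_h$, because $L_hRG_z^c\bm{\xi}^c+L_hRG_z^b\bm{\xi}^b=L_hR(\x-\bm{c}_z)$;
\item the relation $A_hR\x=\bm{b}_h$.
\end{itemize}

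\emph{Inclusion of the constructed set in $\mathcal{Z}_H\cap_R\mathcal{H}$.} This direction is immediate when $\bm{s}\ge 0$. In that case $\mathtt{diag}(\bm{s})\bm{\sigma}\ge 0$, so $L_hR\x\le\bm{r}_h$ and $A_hR\x=\bm{b}_h$, hence $R\x\in\mathcal{H}$. The first block row gives $\x\in\mathcal{Z}_H$.

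\emph{Reverse inclusion.} Given $\x\in\mathcal{Z}_H$ with $R\x\in\mathcal{H}$, define the residual $\bm{d}=\bm{r}_h-L_hR\x\ge 0$. I need to choose $\bm{\sigma}\in[0,1]^{n_i}$ with $s_j\sigma_j=d_j$ for every row $j$. The key estimate is, row by row,
\[
d_j=r_{h,j}-L_{h,j}R\bm{c}_z-\sum_i L_{h,j}Rg_z^{c,i}\xi^c_i-\sum_i L_{h,j}Rg_z^{b,i}\xi^b_i\le s_j .
\]
It uses $\xi_i\in[0,1]$ and $-a\,\xi\le|a|$ for any scalar $a$. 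So $0\le d_j\le s_j$, and I set $\sigma_j=d_j/s_j$ when $s_j>0$. When $s_j=0$ the estimate forces $d_j=0$, and any $\sigma_j$ (say $0$) works. Carrying the same $\bm{\xi}^c,\bm{\xi}^b$ over then satisfies all three block rows, so $\x$ lies in the constructed set.

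\emph{Main obstacle.} The delicate step is the sign of $\bm{s}$. The estimate shows that $s_j$ upper-bounds $r_{h,j}-L_{h,j}R\x$ over all of $\mathcal{Z}_H$. So if some $s_j<0$, no point of $\mathcal{Z}_H$ satisfies row $j$, and the true intersection is empty. The constructed set, however, would still admit points with $d_j=s_j$, $\sigma_j=1$, since a negative slack coefficient reverses the inequality. I would handle this explicitly: first show that $s_j<0$ for some $j$ implies $\mathcal{Z}_H\cap_R\mathcal{H}=\emptyset$, so the statement should be read under the standing assumption $\bm{s}\ge0$ (or with an infeasible row substituted in that case). The double inclusion above then covers the case $\bm{s}\ge 0$.
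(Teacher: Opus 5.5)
\textbf{The gap: the third block row.} Your slack-factor double inclusion is a sound mechanism. It is also a genuinely different route from the paper, which gives no direct verification. The paper instead imports the identity from the $[-1,1]$-basis intersection in \cite[Prop.~3.2.4]{bird_dissertation}, the basis change in \cite[Prop.~1]{robbins2024efficientjournal}, and the constrained-zonotope version \cite[Prop.~1]{Rego2024}.

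One step, however, does not follow from the statement as displayed. The right-hand side of the third block row is $\bm{b}_h$, so that row reads $A_hRG_z^c\bm{\xi}^c+A_hRG_z^b\bm{\xi}^b=\bm{b}_h$, i.e.\ $A_hR(\x-\bm{c}_z)=\bm{b}_h$, not $A_hR\x=\bm{b}_h$. You applied the centering shift to the $L_h$ row, whose right-hand side does carry $-L_hR\bm{c}_z$, but not to this row.

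As written, the identity is false whenever $A_hR\bm{c}_z\neq\mathbf{0}$. Take $\mathcal{Z}_H=\{1+\xi^c\mid\xi^c\in[0,1]\}=[1,2]$, $R=1$, and $\mathcal{H}=\{x\mid x=1\}$. The true intersection is $\{1\}$, but the formula forces $\xi^c=1$ and returns $\{2\}$, so both of your inclusions fail. The third block of the right-hand side must be $\bm{b}_h-A_hR\bm{c}_z$. You should state this correction explicitly, as you did for $\bm{s}$, rather than asserting the expansion.

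\textbf{The sign of $\bm{s}$.} Your caveat is correct and genuinely needed. Take $\mathcal{Z}_H=\{-\xi^c\mid\xi^c\in[0,1]\}=[-1,0]$, $R=1$, and $\mathcal{H}=\{x\mid x\le-2,\ -x\le 3\}$. Then $\bm{s}=(-1,4)$, the constraints force $\xi^c=\sigma_1=1$ and $\sigma_2=\tfrac12$, and the formula returns $\{-1\}$ although the true intersection is empty.

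Saying the constructed set ``would still admit'' such points is a bit strong: they exist only if the bound $s_j$ is attained. For example, with $\mathcal{Z}_H=\{\xi^c\mid\xi^c\in[0,1]\}$ and the same $\mathcal{H}$, the formula is correctly empty. ``Can admit'' is enough for your conclusion.

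\textbf{Overall.} With both corrections, namely $\bm{s}\ge\mathbf{0}$ and equality block $\bm{b}_h-A_hR\bm{c}_z$, your proof is complete. Compared with the paper's appeal to cited identities, your self-contained verification takes a few more lines. In exchange, it exposes both defects in the displayed formula, which the citation-based argument hides.

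Neither defect affects the paper's MTL encodings. Those polytopes constrain only the exposed binary coordinates, whose center entries are zero by~\eqref{eq:hz-augment}, and each of them yields $\bm{s}\ge\mathbf{0}$.
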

\begin{proof}
This is given from the intersection identity in~\cite[Prop. 3.2.4]{bird_dissertation} and the conversion of hybrid zonotopes from $\{-1,1\}$ to $\{0,1\}$ basis in~\cite[Prop. 1]{robbins2024efficientjournal}. This is also an extension of~\cite[Prop. 1]{Rego2024} from constrained zonotopes to hybrid zonotopes.
\end{proof}

\subsection{Metric Temporal Logic} \label{sec:prelim-mtl}
MTL extends LTL by incorporating an explicit notion of time into its operators~\cite{Koymans1990,Belta2019, bellini2000temporal, plaku2015motion}. In the case of discrete-time systems, ``time" refers to a time index (easily related to real time through discretization). The satisfaction or dissatisfaction of MTL specifications take binary values. 

The syntax of MTL is defined as
\begin{equation}
    \varphi \triangleq \begin{array}{c|c|c|c|c}
        \top & \pi & \neg\varphi & \varphi_1 \wedge \varphi_2 & \varphi_1 \until{\tlleft t_1, t_2 \tlright} \varphi_2\;,
    \end{array}
\end{equation}
where $\top$ denotes ``true", $\pi \in \mathcal{P}$ is an atomic proposition from a set of predetermined propositions, $\neg$ is the ``negation" operator, $\wedge$ is the boolean ``and" operator, and $\until{\tlleft t_1, t_2 \tlright}$ is the ``until" operator. The until operator requires that $\varphi_1$ remain true until $\varphi_2$ becomes true at some point in the interval $\tlleft t_1, t_2 \tlright$. Additional useful operators can be derived from these base operators~\cite{Ouaknine2005}. For instance, the boolean ``or" ($\varphi_1 \vee \varphi_2$), temporal ``eventually" ($\eventually{\tlleft t_1, t_2 \tlright} \varphi$), and temporal ``always" ($\always{\tlleft t_1, t_2 \tlright} \varphi$) operators can be defined as 
\begin{subequations}
\begin{align}
    \varphi_1 \vee \varphi_2 &\triangleq \neg\left(\neg \varphi_1 \wedge \neg \varphi_2\right)\;,\\
    \eventually{\tlleft t_1, t_2 \tlright} \varphi &\triangleq \top \until{\tlleft t_1, t_2 \tlright} \varphi\;, \label{eq:eventually-def}\\
    \always{\tlleft t_1, t_2 \tlright} \varphi &\triangleq \neg \eventually{\tlleft t_1, t_2 \tlright} \neg \varphi\;. \label{eq:always-def}
\end{align}
\end{subequations}
The satisfaction of these operators is evaluated on a finite trajectory of states $\bm{\sigma} = \left[\mathcal{T}_0, \mathcal{T}_1, \dots, \mathcal{T}_N\right]$ where $\mathcal{T}_k \in 2^\mathcal{P}$ are the set of propositions that are true at time step $k$. These operators are mathematically defined as
\begin{subequations}\label{eq:mtl-def}
\begin{align}
    \bm{\sigma} \models_k \pi & \Leftrightarrow \pi \in \mathcal{T}_k\,,\label{eq:mtl-def-prop}\\
    \bm{\sigma} \models_k \neg \varphi & \Leftrightarrow\bm{\sigma} \not\models_k \varphi\,,\label{eq:mtl-def-neg}\\
    \bm{\sigma} \models_k \varphi_1 \wedge \varphi_2 & \Leftrightarrow\bm{\sigma} \models_k \varphi_1 \text{ and }\bm{\sigma} \models_k \varphi_2\,,\label{eq:mtl-def-and}\\
    \bm{\sigma} \models_k \varphi_1 \vee \varphi_2 & \Leftrightarrow \bm{\sigma} \models_k \varphi_1 \text{ or }\bm{\sigma} \models_k \varphi_2\,,\label{eq:mtl-def-or}\\
    \bm{\sigma} \models_k \varphi_1 \until{\tlleft t_1, t_2 \tlright} \varphi_2 & \Leftrightarrow \exists t \in \tlleft k + t_1, k + t_2 \tlright \text{ s.t. } \bm{\sigma} \models_t \varphi_2\nonumber\\
    & \quad \text{and } \forall t' \in \tlleft k + t_1,t\tlrightopen, \bm{\sigma} \models_{t'} \varphi_1 \,,\label{eq:mtl-def-until}\\
    \bm{\sigma} \models_k \eventually{\tlleft t_1, t_2 \tlright}\varphi & \Leftrightarrow\exists t \in \tlleft k +t_1, k+t_2 \tlright \text{ s.t. } \bm{\sigma} \models_t \varphi\,,\label{eq:mtl-def-eventually}\\
    \bm{\sigma} \models_k \always{\tlleft t_1, t_2 \tlright}\varphi & \Leftrightarrow  \forall t \in \tlleft k +t_1, k+t_2 \tlright, \bm{\sigma} \models_t \varphi\,,\label{eq:mtl-def-always}
\end{align}
\end{subequations}
where $\bm{\sigma} \models_k \varphi$ denotes $\mathcal{T}_k$ satisfying specification $\varphi$.
This definition of the until operator is sightly different than convention, as $\varphi_1$ only needs to hold true beginning at $t=k+t_1$, as opposed to $t = k$. This is not restrictive as the more conventional until operator, $\varphi_1 \until{\tlleft t_1, t_2 \tlright}' \varphi_2$, can be reconstructed as 
\begin{equation}
    \varphi_1 \until{\tlleft t_1, t_2 \tlright}' \varphi_2 = \left(\always{\tlleft 0, t_1-1 \tlright} \varphi_1\right) \wedge \left(\varphi_1 \until{\tlleft t_1, t_2 \tlright} \varphi_2\right) \, .
\end{equation}
\section{Encoding MTL Specifications in Reachable Sets}\label{sec:methods}
This section presents the proposed method for path planning of dynamic systems to satisfy a desired MTL specification.
First, we mathematically state the problem to be solved. Then we construct a set that relates the system's state to the satisfaction of various propositions, which we refer to as a ``map." We then iteratively construct the set of all dynamically feasible trajectories using forward reachability analysis and represent these reachable sets across all time steps within a single, high-dimensional hybrid zonotope. We then construct sets that represent the MTL operators described in~\eqref{eq:mtl-def} and enforce them through set intersections. Finally, we optimize over this high-dimensional set to produce an optimal trajectory.

\subsection{Problem Statement}
This paper considers the problem of motion planning for autonomous systems subject to MTL specifications. We consider a discrete-time linear time-invariant (LTI) dynamic model given by
\begin{equation} \label{eq:lti}
    \x_{k+1} = A \x_k + B \u_k + \w_k\,,
\end{equation}
where $\x_k \in \mathcal{X}_k \subset \real^{n_x}$ is the state vector at time step $k$, $\u_k \in \mathcal{U}_k \subset \real^{n_u}$ is the control input, and $\w_k \in \mathcal{W}_k \subset \real^{n_w}$ is a state-dependent disturbance. The latter is primarily used to capture spatially-varying disturbances such as wind for an aerial vehicle or surface grade for a ground vehicle, which we assume to be known \textit{a priori}. 

In this work, we consider the proposition of our MTL formulae to indicate whether or not the system's state lies within a polytopic region of the state space, $\mathcal{X}_{\pi}$. It becomes useful to create a ``map," $\mathcal{M} \subset \real^{n_x}$, that translates between the system's state and the satisfaction of propositions. Using a hybrid zonotope to define this map, we can assign one binary variable for each region, i.e., 
\begin{equation} \label{eq:state_in_roi}
    \bm{\sigma} \models_k \pi \Leftrightarrow \x_k \in \mathcal{X}_\pi \Leftrightarrow \xi^b_{\pi,k} = 1\,.
\end{equation}
We then encode the desired MTL specification as a collection of linear inequalities on the system's states, from which we can construct the feasible space of each MTL formula as H-rep polytopes. Taking the generalized intersection of these sets with the reachable set yields a set that contains all dynamically feasible trajectories which also satisfy the desired MTL specification. Lastly, we construct and solve a mixed-integer quadratic program over an $N$-step horizon the form
\begin{subequations} \label{eq:generic-lqr}
\begin{align}
    \u^* = \argmin_{\u}& \norm{\x_N}^2_{Q_N} + \sum_{k=0}^{N-1} \norm{\x_k}^2_Q + \norm{\u_k}^2_R\,, \\
    \text{s.t. }& \forall k \in \{0,\dots,N-1\}\,,\\
    &\x_k \in \mathcal{X}_k\,,\; \u_k \in \mathcal{U}_k\,,\x_N \in \mathcal{X}_N\,,\\
    &\x_{k+1} = A \x_k + B \u_k + \w_k\,,\\
    &\x \models_0 \varphi \,,
\end{align}
\end{subequations}
where $Q_N$, $Q$, and $R$ are the cost matrices associated with the terminal state, stage states, and control inputs, respectively, and $\mathcal{U}_k$, $\mathcal{X}_k$, and $\mathcal{X}_N$ are represented as hybrid zonotopes.

\subsection{Map Representation} \label{sec:map_construction}

For a given MTL specification and set of propositions, we create a map that relates regions of the state space to each proposition. The map is a union of convex polytopes and represented using a hybrid zonotope of the form 
\begin{equation} \label{eq:map-set}
    \mathcal{M} = \left<G^c, G^b, c, A^c, A^b, b\right> \subset \real^{n_x}\,,
\end{equation}
via the union identity in~\cite[Thm. 5]{Siefert2023tac}, which introduces one binary factor per polytope. Each polytope could represent an obstacle, keep-out zone, region of interest, or an otherwise unassigned region of the state space. 
We then augment this hybrid zonotope to include its binary factors into the state vector. The augmented map, $\mathcal{M}'$, can be constructed by the generator matrices $G^{c'}$, $G^{b'}$, and center $c'$ defined as
\begin{equation}\label{eq:hz-augment}
    G^{c'} = \begin{bmatrix}
        G^c \\ \mathbf{0}
    \end{bmatrix}\, ,\; 
    G^{b'} = \begin{bmatrix}
        G^b \\ I_{n_b}
    \end{bmatrix}\, ,\;
    c' = \begin{bmatrix}
        c \\ \mathbf{0}
    \end{bmatrix}\, .    
\end{equation}
With $\mathcal{M}' = \left<G^{c'}, G^{b'}, c', A^c, A^b, b\right> \subset \real^{n_x} \times \{0,1\}^{n_b}$, the binary variables associated with the state vector at a given time step can be accessed, i.e., $[\x_k^T, \bm{\xi}^{bT}_k]^T \in \mathcal{M}'$. For cases where the map is time varying, we denote the augmented map at time step $k$ as $\mathcal{M}'_k$. This exposure of the binary variables allows the construction of temporal constraints along these dimensions and the incorporation of region-dependent disturbances. To enforce these disturbances, we assign a disturbance vector $\v_i$ associated with each polytopic region of the map, collect them into a disturbance matrix $W = [\v_1,\dots,\v_{n_b}]$, and use a linear transformation $\w_k = W \bm{\xi}^b_k$ to define the disturbance at each time step. Since $\bm{\xi}^b_k \in \{0,1\}^{n_b}$ and only one element is non-zero (say the $i^{\text{th}}$ element, corresponding to $\x_k$ being in the $i^{\text{th}}$ region of the map), the linear transformation produces $W \bm{\xi}^b_k = \v_i$.

We present two ways to partition the vehicle's environment into polytopic regions:

\subsubsection{Disjoint Convex Partition}
For a state set $\mathcal{X}$, a disjoint convex partition (DCP) is a collection of $n$ convex polytopes $\{\mathcal{X}_{\pi,1}, \dots, \mathcal{X}_{\pi,n}\}$ such that
\begin{subequations}
\begin{align}
    \bigcup_{j=1}^n \mathcal{X}_{\pi,j} &= \mathcal{X}\;, \label{eq:union_eqs_globl}\\
    \mathcal{X}_{\pi,j} \cap \mathcal{X}^\circ_{\pi,k} &= \emptyset,\; \forall j,k \in \{1,\dots,n\},\; j\neq k,
\end{align}
\end{subequations}
i.e., each subregion at most shares a border with any other subregion. This definition of a DCP enforces that the system's state is associated with exactly one proposition, with the exception of the boundaries between propositions. At these boundaries, the state can satisfy any one of the bordering propositions, therefore a DCP is only able to prohibit strict obstacle entry. Using a DCP is necessary when optimizing trajectories through environments with obstacles or keep-out zones. If we permitted a target region $\mathcal{X}_{\pi,1}$ to overlap an obstacle $\mathcal{X}_{\pi,2}$, then it would be feasible for the state to occupy the portion of $\mathcal{X}_{\pi,1}$ that intersects $\mathcal{X}_{\pi,2}$. The Hertel and Mehlhorn algorithm~\cite{hertel_mehlhorn} can be used to create DCPs for two-dimensional environments. Fig.~\ref{fig:example-map}(a) shows an example of a DCP around a polytopic region $\mathcal{X}_{\pi,1}$ using six additional polytopes.

\subsubsection{Non-disjoint Convex Partition}
For a state set $\mathcal{X}$, a non-disjoint convex partition (NCP) is a collection of $n$ convex polytopes $\{\mathcal{X}_{\pi,1}, \dots, \mathcal{X}_{\pi,n}\}$ satisfying only~\eqref{eq:union_eqs_globl}. Therefore, one convex polytope of the partition might intersect or be completely contained by another. The state can then satisfy a proposition associated with any of the overlapping polytopes. NCPs are useful when there are no obstacles or keep-out zones to be considered, or when combining target regions with a previously constructed DCP (as in Sec.~\ref{sec:example-door-key}), as we can represent the map with fewer convex regions than a DCP. Fig.~\ref{fig:example-map}(b) shows an example of an NCP, with a polytopic region inside a rectangular environment where $\mathcal{X}_{\pi,2}$ can exist in the ``background" of $\mathcal{X}_{\pi,1}$.

\begin{figure}[tb]
    \centering
    \input{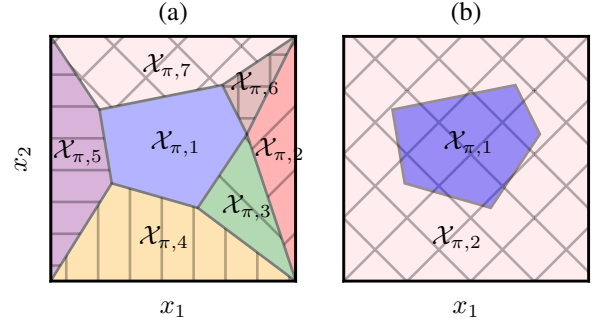}
    \caption{Example of a rectangular environment with a single region of interest, $\mathcal{X}_{\pi,1}$, created using (a) a disjoint convex partition and (b) a non-disjoint convex partition.}
    \label{fig:example-map}
\end{figure}

\subsection{Reachability Analysis} \label{sec:reach}

We now create a high-dimensional hybrid zonotope that represents all dynamically feasible trajectories of the system described in~\eqref{eq:lti}. Given a set of initial lifted states $\mathcal{X}'_0 = \left\{[\x^T, \bm{\xi}^{bT}]^T \middle | \x \in \mathcal{X}_0, \bm{\xi}^b \in\{0,1\}^{n_b}\right\}\subseteq \mathcal{M}'_0$, control inputs $\mathcal{U}_k$, disturbance matrix $W\in\real^{n \times n_b}$, and maps $\mathcal{M}_k$, forward reachability analysis can be performed via the recursive set operation
\begin{equation}\label{eq:reach-naive}
    \mathcal{X}'_{k+1} = \left(\begin{bmatrix}
        A & W \\ \mathbf{0} & \mathbf{0}
    \end{bmatrix} \mathcal{X}'_k \oplus \begin{bmatrix}
        B  \\  \mathbf{0}
    \end{bmatrix} 
    \mathcal{U}_k \oplus \begin{bmatrix}
        \mathbf{0} \\ I_{n_b}
    \end{bmatrix}\{0,1\}^{n_b} \right) \cap \mathcal{M}'_{k+1}\,.
\end{equation}

To produce reachable sets that are equivalent to~\eqref{eq:reach-naive} but have a structure that is better suited to sparsity-exploiting optimization, we instead employ the identity for forward reachability described in~\cite{robbins2025sparsitypromotingreachabilityanalysisoptimization}, given as
\begin{equation}
    \mathcal{X}'_{k+1} = \begin{bmatrix}
        \mathbf{0} & \mathbf{0} & I_{n_x+n_b}
    \end{bmatrix} \left( \left(\mathcal{X}'_k \times \mathcal{U}_k \times \mathcal{M}'_{k+1}\right) \cap_R\left\{\mathbf{0}\right\}\right)\,,
\end{equation}
where
\begin{equation}
    R = \begin{bmatrix}
        A & W & B & -I_{n_x} & \mathbf{0}_{n_x \times n_b}
    \end{bmatrix} \;.
\end{equation}
We can collect these reachable sets, along with the allowable input sets, to create one lifted set that contains all dynamically feasible trajectories over the horizon as
\begin{equation}
    \mathcal{Z}_{\text{reach}} = \mathcal{X}'_0 \times \mathcal{U}_0 \times \dots \times \mathcal{U}_{N-1} \times \mathcal{X}'_N\,.
\end{equation}
Therefore, any specific feasible trajectory of states, binary generators associated with those states, and control inputs are contained by this lifted reachable set, i.e.,
\begin{equation}
    \begin{bmatrix}
        \x_0^T & \bm{\xi}^{bT}_0 & \u_0^T & \dots & \u_{N-1}^T & \x_N^T & \bm{\xi}^{bT}_N
    \end{bmatrix}^T \in \mathcal{Z}_{\text{reach}}\,.
\end{equation}
The memory complexity of representing $\mathcal{Z}_{\text{reach}}$ is
\begin{subequations}
    \begin{align}
        n^{\mathcal{Z}_{\text{reach}}} &= \sum_{k=0}^{N-1} \left(n^{\mathcal{X}'_k} + n^{\mathcal{U}_k}\right) + n^{\mathcal{X}'_N}\,,\\
        n_g^{\mathcal{Z}_{\text{reach}}} &= \sum_{k=0}^{N-1} \left(n_g^{\mathcal{X}'_k} + n_g^{\mathcal{U}_k}\right) + n_g^{\mathcal{X}'_N}\,,\\
        n_b^{\mathcal{Z}_{\text{reach}}} &= \sum_{k=0}^{N-1} \left(n_b^{\mathcal{X}'_k} + n_b^{\mathcal{U}_k}\right) + n_g^{\mathcal{X}'_N}\,,\\
        n_c^{\mathcal{Z}_{\text{reach}}} &= \sum_{k=0}^{N-1} \left(n_c^{\mathcal{X}'_k} + n_c^{\mathcal{U}_k} + n^{\mathcal{X}_{k}}\right) + n_c^{\mathcal{X}'_N}\,.
    \end{align}
\end{subequations}

\subsection{Representation of Key MTL Operators as H-rep Polytopes}\label{sec:mtl-to-hrep}
This section presents methods for constructing the feasible space of all key MTL operators as H-rep polytopes.
\subsubsection{Boolean Operators}
The boolean operators presented in~\eqref{eq:mtl-def-prop}-\eqref{eq:mtl-def-or} enforce validity of certain propositions, or pairs of propositions, at a specific time index $k$. These operators can be constructed using linear equalities and inequalities.
\begin{proposition}
Given a map as described in Sec.~\ref{sec:map_construction} with propositions $\pi$, $\varphi_1$, and $\varphi_2$ where $\bm{\sigma} \models_k \varphi \Leftrightarrow \xi^b_{\varphi,k} = 1$, the MTL specifications for proposition satisfaction, boolean negation, ``and", and ``or" can be represented with through a single linear inequality or equality constraint of the form
\begin{subequations}\label{eq:boolean-ineqs}
\begin{align} 
    \bm{\sigma} \models_k \pi & \Leftrightarrow \xi^b_{\pi, k} = 1\,, \label{eq:prop-eq}\\
    \bm{\sigma} \models_k \neg \varphi_1 & \Leftrightarrow \xi^b_{\varphi_1, k} = 0\,, \label{eq:neg-eq}\\
    \bm{\sigma} \models_k \varphi_1 \wedge \varphi_2 & \Leftrightarrow \xi^b_{\varphi_1, k} + \xi^b_{\varphi_2, k} = 2\,, \label{eq:and-eq}\\
    \bm{\sigma} \models_k \varphi_1 \vee \varphi_2 & \Leftrightarrow \xi^b_{\varphi_1, k} + \xi^b_{\varphi_2, k} \geq 1\,. \label{eq:or-ineq}
\end{align}
\end{subequations}
\end{proposition}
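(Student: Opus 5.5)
The plan is to prove each of the four equivalences in \eqref{eq:boolean-ineqs} directly from the semantics \eqref{eq:mtl-def-prop}--\eqref{eq:mtl-def-or}. Two ingredients are used throughout. The first is the standing hypothesis $\bm{\sigma} \models_k \varphi \Leftrightarrow \xi^b_{\varphi,k} = 1$, which for an atomic proposition is exactly \eqref{eq:state_in_roi}, supplied by the augmented map $\mathcal{M}'$ of Sec.~\ref{sec:map_construction}. The second is the fact that each exposed factor satisfies $\xi^b_{\varphi,k} \in \{0,1\}$, since the binary factors of a hybrid zonotope range over $\{0,1\}^{n_b}$ by \eqref{eq:hyb_zonotope}. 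Every equivalence then reduces to a case check over the finitely many binary values involved.

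The four cases go in the order of the statement.
\begin{enumerate}
\item For \eqref{eq:prop-eq}, I combine \eqref{eq:mtl-def-prop} with \eqref{eq:state_in_roi}: $\pi \in \mathcal{T}_k$ iff $\x_k \in \mathcal{X}_\pi$ iff $\xi^b_{\pi,k}=1$.
\item For \eqref{eq:neg-eq}, I use \eqref{eq:mtl-def-neg}: $\bm{\sigma} \models_k \neg\varphi_1$ iff $\xi^b_{\varphi_1,k} \neq 1$. Because the factor is binary, this holds iff $\xi^b_{\varphi_1,k}=0$.
\item For \eqref{eq:and-eq}, I use \eqref{eq:mtl-def-and}. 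Two binary numbers sum to $2$ iff both equal $1$, and that is iff both $\varphi_1$ and $\varphi_2$ hold at $k$.
\item For \eqref{eq:or-ineq}, I use \eqref{eq:mtl-def-or}. A sum of two binaries is at least $1$ iff at least one of them is $1$.
\end{enumerate}
Each constraint is a single linear equality or inequality in the lifted state $[\x_k^T,\bm{\xi}^{bT}_k]^T$. It can therefore be written in the form $\langle L,\bm{r},A,\bm{b}\rangle_h$ of \eqref{eq:hrep} by taking a row of $L$ or $A$ that selects the relevant coordinates. To meet the boundedness requirement, I would add the box $0 \le \xi^b \le 1$ if needed, which is harmless on binary values.

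The main subtlety is that ``$\bm{\sigma} \models_k \varphi \Leftrightarrow \xi^b_{\varphi,k}=1$'' is a hypothesis, not automatic, for the negation and composite cases. For example, $\neg\varphi_1$ read through the map is only faithful if the binary $\xi^b_{\varphi_1,k}$ is tied to satisfaction in both directions. On a DCP this holds away from region boundaries. On an NCP or on boundaries, a state lying in $\mathcal{X}_{\varphi_1}$ could have its region binary set to $0$ while another overlapping region is active. I would handle this by stating plainly that the proposition assumes the biconditional, as given in its hypothesis. The equivalences then hold exactly, and the remaining steps are elementary arithmetic on $\{0,1\}$.
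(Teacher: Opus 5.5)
Your proof is correct and follows essentially the same route as the paper: you take \eqref{eq:prop-eq} as the stated hypothesis and obtain negation, ``and'', and ``or'' from the semantics \eqref{eq:mtl-def-neg}--\eqref{eq:mtl-def-or} by a direct case check on the binary values $\xi^b \in \{0,1\}$. Your extra remarks on writing the constraints as H-rep polytopes (the content of Proposition~\ref{prop:boolean-hrep}, not this one) and on when the biconditional is faithful to the map are not needed here, but they do no harm.
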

\begin{proof}
    For proposition satisfaction,~\eqref{eq:prop-eq} is a stated assumption and is included for completeness. For negation, it follows from~\eqref{eq:mtl-def-neg},~\eqref{eq:prop-eq}, and $\xi^b_k \in \{0,1\}$ that $\xi^b_{\varphi_k} = 0 \rightarrow \xi^b_{\varphi_k} \neq 1 \rightarrow \bm{\sigma} \not\models_k \varphi \rightarrow \bm{\sigma} \models_k \neg \varphi$ and that $\bm{\sigma} \models_k \neg \varphi \rightarrow \bm{\sigma} \not\models_k \varphi \rightarrow \xi^b_{\varphi_k} \neq 1 \rightarrow \xi^b_{\varphi_k} = 0$. For ``and", it follows from~\eqref{eq:mtl-def-and} and $\xi^b_{\varphi_1, k}, \xi^b_{\varphi_2, k} \in \{0,1\}$ that $\xi^b_{\varphi_1, k} + \xi^b_{\varphi_2, k} = 2 \rightarrow \xi^b_{\varphi_1, k} = 1$ and $\xi^b_{\varphi_2, k} = 1 \rightarrow \bm{\sigma} \models_k \varphi_1$ and $\bm{\sigma} \models_k \varphi_1 \rightarrow \bm{\sigma} \models_k \varphi_1 \wedge \varphi_2$. Conversely, $\bm{\sigma} \models_k \varphi_1 \wedge \varphi_2 \rightarrow \bm{\sigma} \models_k \varphi_1$ and $\bm{\sigma} \models_k \varphi_1 \rightarrow \xi^b_{\varphi_1, k} = 1$ and $\xi^b_{\varphi_2, k} = 1 \rightarrow \xi^b_{\varphi_1, k} + \xi^b_{\varphi_2, k} = 2$. For ``or", it follows from~\eqref{eq:mtl-def-or} and $\xi^b_{\varphi_1, k}, \xi^b_{\varphi_2, k} \in \{0,1\}$ that $\xi^b_{\varphi_1, k} + \xi^b_{\varphi_2, k} \geq 1 \rightarrow \xi^b_{\varphi_1, k} = 1$ or $\xi^b_{\varphi_2, k} = 1 \rightarrow \bm{\sigma} \models_k \varphi_1$ or $\bm{\sigma} \models_k \varphi_1 \rightarrow \bm{\sigma} \models_k \varphi_1 \vee \varphi_2$. Conversely, $\bm{\sigma} \models_k \varphi_1 \vee \varphi_2 \rightarrow \bm{\sigma} \models_k \varphi_1$ or $\bm{\sigma} \models_k \varphi_1 \rightarrow \xi^b_{\varphi_1, k} = 1$ or $\xi^b_{\varphi_2, k} = 1 \rightarrow \xi^b_{\varphi_1, k} + \xi^b_{\varphi_2, k} \geq 1$.
\end{proof}

To enforce these constraints on reachable sets, we construct H-rep polytopes that contain the feasible space of the boolean operators.

\begin{proposition} \label{prop:boolean-hrep}
    The feasible space of the constraints in~\eqref{eq:boolean-ineqs} can be represented as H-rep polytopes of the form
    \begin{subequations} \label{eq:boolean-cz}
    \begin{align}
        \mathcal{H}_{\pi} &= \left<[\,], [\,], 1, 1\right>_h\subset \real\,,\\
        \mathcal{H}_{\neg} &= \left<[\,], [\,], 1, 0\right>_h\subset \real\,,\\
        \mathcal{H}_{\wedge} &= \left<[\,], [\,], \begin{bmatrix}
            1 & 1
        \end{bmatrix}, 2\right>_h\subset \real^{2}\,,\\
        \mathcal{H}_{\vee} &= \left<\begin{bmatrix}
            -1 & -1
        \end{bmatrix}, -1, [\,], [\,]\right>_h\subset \real^{2} \label{eq:or-cz}\,,
    \end{align}
    \end{subequations}
    where $\xi^b_{\pi,k} \in \mathcal{H}_{\pi}$, $\xi^b_{\varphi,k} \in \mathcal{H}_{\neg}$, $[\xi^b_{\varphi_1,k}, \xi^b_{\varphi_2,k}]^T \in \mathcal{H}_{\wedge}$, and $[\xi^b_{\varphi_1,k}, \xi^b_{\varphi_2,k}]^T \in \mathcal{H}_{\vee}$.
\end{proposition}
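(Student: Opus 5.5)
The plan is to unpack the shorthand $\mathcal{H} = \left\langle L, \bm{r}, A, \bm{b}\right\rangle_h$ from~\eqref{eq:hrep} for each of the four sets in~\eqref{eq:boolean-cz}. For each one I will check that membership of the relevant binary vector in that set is equivalent to the corresponding constraint in~\eqref{eq:boolean-ineqs}. Once that equivalence holds, the previous proposition gives equivalence with the MTL satisfaction relation, using the standing assumption $\bm{\sigma}\models_k\varphi \Leftrightarrow \xi^b_{\varphi,k}=1$.

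The argument goes case by case. For $\mathcal{H}_\pi$, the inequality data $L,\bm{r}$ are empty and $A=1$, $\bm{b}=1$. The set is therefore $\{\xi\in\real \mid \xi = 1\}$, which is exactly~\eqref{eq:prop-eq}. Likewise, $\mathcal{H}_\neg=\{\xi \mid \xi=0\}$ reproduces~\eqref{eq:neg-eq}. For $\mathcal{H}_\wedge$ we have $A=[1\;\,1]$ and $\bm{b}=2$, so the set is $\{[\xi_1,\xi_2]^T \mid \xi_1+\xi_2=2\}$, matching~\eqref{eq:and-eq}. For $\mathcal{H}_\vee$, the equality data are empty, $L=[-1\;\,-1]$ and $\bm{r}=-1$. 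This gives $-\xi_1-\xi_2\le -1$, i.e.\ $\xi_1+\xi_2\ge 1$, matching~\eqref{eq:or-ineq}. Each identification is a direct substitution into~\eqref{eq:hrep}.

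The one subtlety, and the step I expect to need care, is that the definition of an H-rep polytope in~\eqref{eq:hrep} requires boundedness. The sets $\mathcal{H}_\wedge$ and $\mathcal{H}_\vee$, read literally in $\real^2$, are unbounded (a line and a halfplane). I would address this by noting that the sets are only ever used through generalized intersection with the hybrid zonotope reachable set (via the previous proposition), in which the arguments $\xi^b$ are already binary and hence lie in $[0,1]$. So the relevant feasible space is the intersection with $\{0,1\}^2$: the point $(1,1)$ for ``and'', and $\{(0,1),(1,0),(1,1)\}$ for ``or''. Equivalently, one could append the box constraints $0\le\xi\le 1$ to $L,\bm{r}$ without changing the intersection with the reachable set.

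Either way, the proof reduces to restricting to binary arguments and applying the previous proposition. The result is that $\xi^b\in\mathcal{H}_\ast$ holds if and only if the associated MTL formula is satisfied at time $k$, which is the claim.
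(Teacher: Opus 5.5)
Your proposal is correct and matches the paper's proof. Both unpack each $\left\langle L,\bm{r},A,\bm{b}\right\rangle_h$ via~\eqref{eq:hrep} into $\{\xi=1\}$, $\{\xi=0\}$, $\{\xi_1+\xi_2=2\}$ and $\{-\xi_1-\xi_2\le -1\}$, and read off the constraints in~\eqref{eq:boolean-ineqs}. Your boundedness remark catches something the paper's proof passes over: $\mathcal{H}_\wedge$ and $\mathcal{H}_\vee$ as written are a line and a half-plane, so appending $0\le\xi\le 1$ to $L,\bm{r}$ is what makes them genuine H-rep polytopes in the sense of~\eqref{eq:hrep}, and it leaves the intersection with $\mathcal{Z}_{\text{reach}}$ unchanged because those coordinates are binary.
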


\begin{proof}
    By definition of an H-rep polytope in~\eqref{eq:hrep}, $\mathcal{H}_{\pi} = \left\{x\in \real \;\middle|\; x = 1\right\}$, $\mathcal{H}_{\neg} = \left\{x\in \real \;\middle|\; x = 0\right\}$, $\mathcal{H}_{\wedge} = \left\{\x\in \real^2 \;\middle|\; \x_1 + \x_2 = 2\right\}$, $\mathcal{H}_{\vee} = \left\{\x\in \real^2 \;\middle|\; -\x_1 - \x_2 \leq -1\right\}$. Thus, any variables that satisfy the constraints in~\eqref{eq:boolean-ineqs} belong to the feasible space of~\eqref{eq:boolean-cz}.
\end{proof}

\subsubsection{Until}
As explained in Sec.~\ref{sec:prelim-mtl}, the until operator, $\varphi_1 \until{\tlleft t_1, t_2 \tlright} \varphi_2$,  requires that the first operand, $\varphi_1$, remains true beginning at $t_1$ and until the second operand, $\varphi_2$, becomes true at some point within $\tlleft t_1, t_2 \tlright$. For ease of notation, let $\windowlength = t_2 - t_1 + 1$ be the cardinality of $\tlleft t_1, t_2 \tlright$. One common method to encode this constraint is via disjunctions~\cite{Kurtz2022a, Cardona2023}, which is given by
\begin{align}
    \varphi_1 \until{\tlleft t_1, t_2 \tlright} \varphi_2 = \bigvee_{t=k+t_1}^{k+t_2-1} \left(\xi^b_{\varphi_2, t+1} \wedge \bigwedge_{t'=k+t_1}^{t} \xi^b_{\varphi_1, t'}  \right) \, ,
\end{align}
to enumerate every admissible state trajectory and constrain the system to one such trajectory. However, this requires additional propositions to be created to track the validity of each admissible trajectory. Since each proposition results in a new binary variable in the representation, the complexity of this method can quickly become untenable. Therefore, we present an alternative way to implement the ``until" constraints as a conjunction of linear inequalities that leads to much lower representational complexity.

\begin{proposition}
    Given a map as described in Sec.~\ref{sec:map_construction} with propositions $\varphi_1$ and $\varphi_2$ where $\xi^b_{\varphi, k} = 1\Leftrightarrow \bm{\sigma} \models_k \varphi$, the MTL specification $\varphi = \varphi_1 \until{\tlleft t_1, t_2 \tlright} \varphi_2$ can be represented with $\windowlength$ linear inequality constraints of the form
\begin{subequations} \label{eq:until-ineq}
\begin{align} 
    &\sum_{j=k+t_1}^{k+t_2} \xi^b_{\varphi_2, j} \geq 1\,, \label{eq:until-event}\\
    \begin{split}\xi^b_{\varphi_2,t} &\leq \sum_{j=k+t_1}^{t-1} \frac{\xi^b_{\varphi_1, j}}{t-t_1-k} + \xi^b_{\varphi_2, j}\,,\\
    \forall t&\in \tlleftopen k+t_1, k+t_2 \tlright\,.\end{split}\label{eq:until-subsq}
\end{align}
\end{subequations}
\end{proposition}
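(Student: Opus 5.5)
The plan is to prove both directions of the equivalence between the until semantics~\eqref{eq:mtl-def-until} and the constraint set~\eqref{eq:until-ineq}. The whole argument rests on the \emph{first} time step in the window at which $\varphi_2$ holds. I read~\eqref{eq:until-subsq} as
\begin{equation*}
\xi^b_{\varphi_2,t} \leq \frac{1}{t-t_1-k}\sum_{j=k+t_1}^{t-1} \xi^b_{\varphi_1, j} + \sum_{j=k+t_1}^{t-1}\xi^b_{\varphi_2, j}\,.
\end{equation*}
The first term on the right is the average of $\xi^b_{\varphi_1,\cdot}$ over $\tlleft k+t_1, t-1\tlright$, an interval of exactly $t-t_1-k$ indices. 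Since all variables are binary, this average lies in $[0,1]$ and equals $1$ if and only if $\varphi_1$ holds at every index of that interval. The second term counts earlier occurrences of $\varphi_2$ in the window. So the constraint says: whenever $\varphi_2$ holds at $t$, either $\varphi_1$ has held throughout $\tlleft k+t_1,t\tlrightopen$, or $\varphi_2$ already held earlier in the window.

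\emph{Constraints imply until.} By~\eqref{eq:until-event} and binarity, some $\xi^b_{\varphi_2,j}=1$ with $j\in\tlleft k+t_1,k+t_2\tlright$. Let $t^\ast$ be the smallest such index. If $t^\ast=k+t_1$, the universal part of~\eqref{eq:mtl-def-until} is vacuous and we are done. Otherwise $t^\ast\in\tlleftopen k+t_1,k+t_2\tlright$, so~\eqref{eq:until-subsq} applies at $t=t^\ast$. By minimality the $\varphi_2$-sum is $0$, so the $\varphi_1$ average must be at least $1$. Hence $\xi^b_{\varphi_1,j}=1$, i.e.\ $\bm{\sigma}\models_j\varphi_1$, for all $j\in\tlleft k+t_1,t^\ast\tlrightopen$. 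Thus $t^\ast$ witnesses~\eqref{eq:mtl-def-until}.

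\emph{Until implies constraints.} Let $t$ be a witness from~\eqref{eq:mtl-def-until}, and let $t^\ast\le t$ be the first index in the window at which $\varphi_2$ holds. Then~\eqref{eq:until-event} holds immediately. Since $t^\ast\le t$, $\varphi_1$ holds on all of $\tlleft k+t_1,t^\ast\tlrightopen$. We verify~\eqref{eq:until-subsq} for each $t'\in\tlleftopen k+t_1,k+t_2\tlright$ in three cases:
\begin{itemize}
\item if $t'<t^\ast$, the left side is $0$ while the right side is nonnegative;
\item if $t'=t^\ast$, the $\varphi_1$ average equals $1$;
\item if $t'>t^\ast$, the $\varphi_2$-sum contains $\xi^b_{\varphi_2,t^\ast}=1$.
\end{itemize}
In every case the right side is at least $1\ge\xi^b_{\varphi_2,t'}$ or at least the left side. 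Counting gives one constraint~\eqref{eq:until-event} plus $\windowlength-1$ constraints from~\eqref{eq:until-subsq}, so $\windowlength$ inequalities in total.

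The main obstacle is the reverse direction. The constraints are imposed at \emph{every} $t$ in the window, not just at the witness, so one must show they do not over-restrict later occurrences of $\varphi_2$ after which $\varphi_1$ may fail. This is exactly why the argument is anchored at the first occurrence $t^\ast$ rather than at an arbitrary witness. I would also state the interval-length bookkeeping explicitly: the normalization $t-t_1-k$ must equal the number of summands, since that is what makes the average equal $1$ exactly when $\varphi_1$ holds throughout.
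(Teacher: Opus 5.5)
Your proof is correct and takes the same route as the paper's: both anchor the argument at the first index where $\varphi_2$ holds and check~\eqref{eq:until-subsq} separately before, at, and after that index, with your version separating the two implications more cleanly. Your insistence that the normalization equal the number of summands, $t-t_1-k$, is also right. The paper's proof text writes $t-t_1-k+1$ at that step, which appears to be a typo, since with that weight the right-hand side at a first occurrence $t^\ast>k+t_1$ is strictly less than $1$.
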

\begin{proof}
    From~\eqref{eq:until-event} and the fact that $\xi^b_{\varphi_2, j}\in\{0,1\}$, $\varphi_2$ must become true at some point in the interval $\tlleft k+t_1, k+t_2 \tlright$. 
    First, consider the case where $\varphi_2$ first becomes true at $t=k+t_1$. Equation~\eqref{eq:until-event} is satisfied as $\sum_{j=k+t_1}^{k+t_2} \xi^b_{\varphi_2, j} \geq \xi^b_{\varphi_2, k+t_1} = 1$, and equation~\eqref{eq:until-subsq} is satisfied as $\sum_{j=k+t_1}^{t-1} \frac{1}{t-t_1-k}\xi^b_{\varphi_1, j} + \xi^b_{\varphi_2, j} \geq \xi^b_{\varphi_2, k+t_1} = 1 \geq \xi^b_{\varphi_2,t}$.
    Then, consider the case where $\varphi_2$ becomes true at some time step $i\in\tlleftopen k+t_1, k+t_2 \tlright$, implying that $\xi^b_{\varphi_2, i} =1$ and $\xi^b_{\varphi_2, t} = 0$ $\forall t \in \tlleft k+t_1, i \tlrightopen$ (similarly satisfying~\eqref{eq:until-event}). Therefore, $\forall t \in \tlleft k+t_1, i \tlrightopen$, \eqref{eq:until-subsq} yields that $0 \leq \sum_{j=k+t_1}^{t-1} \frac{1}{t-t_1-k+1} \xi^b_{\varphi_1, j}$, which is always true since $\xi^b_{\varphi_1, j}\in\{0,1\}$. For $t=i$, \eqref{eq:until-subsq} yields that $1 \leq \sum_{j=k+t_1}^{t-1} \frac{1}{t-t_1-k+1} \xi^b_{\varphi_1, j}$, which is only true in the case that $\xi^b_{\varphi_1, j} = 1$ $\forall j \in \tlleft k+t_1, t \tlrightopen$ (i.e., $\varphi_1$ has been true during the entire interval $\tlleft k+t_1, i \tlrightopen$). Then, $\forall t\in \tlleftopen i, k+t_2 \tlright$, \eqref{eq:until-subsq} holds true as $\xi^b_{\varphi_2,t} \in \{0,1\}$ and $\sum_{j=k+t_1}^{t-1}\xi^b_{\varphi_2,j} \geq 1$ since $\xi^b_{\varphi_2, i} =1$. Thus the desired ``until" behavior is achieved.
\end{proof}

Similar to Proposition~\ref{prop:boolean-hrep}, we construct H-rep polytopes to contain the feasible space of the until operator.
\begin{proposition}
    The feasible space of the constraints provided for the until operator in~\eqref{eq:until-ineq} can be represented as an H-rep polytope of the form
\begin{subequations}\label{eq:until-cz}
\begin{align} 
    \mathcal{H}_{\until{}} &= \left<[L_1\,, L_2], \begin{bmatrix}
        \mathbf{0_{\windowlength-1 \times 1}} \\ -1
    \end{bmatrix}, [\,], [\,]\right>_h \subset \real^{2 \windowlength}\,,
\end{align}
\end{subequations}
such that $[\xi^b_{\varphi_1, k+t_1}, \dots, \xi^b_{\varphi_1, k+t_2}, \xi^b_{\varphi_2, k+t_1}, \dots, \xi^b_{\varphi_2, k+t_2}]^T \in \mathcal{H}_{\until{}}$, where
\begin{subequations} \label{eq:until-cz-mats}
\renewcommand{\arraystretch}{1.1} \begin{align}
    L_1 &= \begin{bmatrix}
        -1 & 0 & 0 & \cdots & 0 & 0\\
        -\frac{1}{2} &  -\frac{1}{2} & 0 & \cdots & 0 & 0\\
        -\frac{1}{3} &  -\frac{1}{3} & -\frac{1}{3} & \cdots & 0 & 0\\
        \vdots & \vdots & \vdots & \ddots & \vdots & \vdots\\
        \frac{-1}{\windowlength-1} & \frac{-1}{\windowlength-1} &\frac{-1}{\windowlength-1} & \cdots & \frac{-1}{\windowlength-1} & 0\\
        0 & 0 & 0 & \cdots & 0 & 0
    \end{bmatrix}\,, \renewcommand{\arraystretch}{1.0}\\
    L_2 &= \begin{bmatrix}
        -1 & 1 & 0 & 0 & \cdots & 0 \\ 
        -1 & -1 & 1 & 0 & \cdots & 0 \\
        -1 & -1 & -1 & 1 & \cdots & 0 \\
        \vdots & \vdots & \vdots & \ddots & \ddots & \vdots\\ 
        -1 & -1 & -1 & \cdots & -1 & 1 \\
        -1 & -1 & -1 & \cdots & -1 & -1
    \end{bmatrix}\,.
\end{align}
\end{subequations}
\end{proposition}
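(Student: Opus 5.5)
The plan is to show, row by row, that the inequality system $[L_1,\,L_2]\,\bm{z} \leq \bm{r}$ with $\bm{z} = [\xi^b_{\varphi_1, k+t_1}, \dots, \xi^b_{\varphi_1, k+t_2}, \xi^b_{\varphi_2, k+t_1}, \dots, \xi^b_{\varphi_2, k+t_2}]^T$ coincides with the $\windowlength$ inequalities in~\eqref{eq:until-ineq}. The H-rep definition~\eqref{eq:hrep} then gives the claim, in the same way as the proof of Proposition~\ref{prop:boolean-hrep}. There are no equality constraints, so only the inequality block has to be matched.

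First I will reindex. Set $i = t - (k+t_1) \in \tlleft 1, \windowlength - 1\tlright$ for the constraints~\eqref{eq:until-subsq}, so that the sum over $j \in \tlleft k+t_1, t-1\tlright$ runs over the first $i$ entries of each block. Each of these constraints becomes
\begin{equation*}
\xi^b_{\varphi_2, k+t_1+i} - \sum_{m=0}^{i-1}\left(\tfrac{1}{i}\xi^b_{\varphi_1, k+t_1+m} + \xi^b_{\varphi_2, k+t_1+m}\right) \leq 0.
\end{equation*}
Here I read the weight $\frac{1}{t-t_1-k}$ as applying to the $\varphi_1$ term only, as in the proof of the preceding proposition.

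Next I compare this with row $i$ of the matrices:
\begin{itemize}
\item Row $i$ of $L_1$ has $-\frac{1}{i}$ in columns $1,\dots,i$ and zeros elsewhere.
\item Row $i$ of $L_2$ has $-1$ in columns $1,\dots,i$, a $+1$ in column $i+1$, and zeros after that.
\item The corresponding entry of $\bm{r}$ is $0$.
\end{itemize}
So the first $\windowlength-1$ rows reproduce exactly the $\windowlength-1$ inequalities above.

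The last row has zeros in $L_1$, all $-1$ in $L_2$, and right-hand side $-1$. It reads $-\sum_{j} \xi^b_{\varphi_2,j} \leq -1$, which is~\eqref{eq:until-event}. Together these give exactly $\windowlength$ inequalities, so the feasible set of~\eqref{eq:until-ineq} in these $2\windowlength$ variables equals $\mathcal{H}_{\until{}}$. By the preceding proposition, $\bm{\sigma}\models_k\varphi_1 \until{\tlleft t_1, t_2 \tlright}\varphi_2$ is then equivalent to $\bm{z}\in\mathcal{H}_{\until{}}$ together with $\bm{z}\in\{0,1\}^{2\windowlength}$; the binary condition is supplied by the reachable set.

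The main obstacle is the indexing bookkeeping. One issue is the off-by-one between the interval $\tlleftopen k+t_1, k+t_2\tlright$ and the row count $\windowlength-1$. The other is that the coefficient $\frac{1}{t-t_1-k}$ must equal $\frac{1}{i}$ and not $\frac{1}{i+1}$, since the earlier proof writes it inconsistently. I would settle both by checking the cases $i=1$ and $i=\windowlength-1$ explicitly against the first and $(\windowlength-1)$th rows of $L_1$ and $L_2$.

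A secondary point is boundedness, which the H-rep definition requires. I would address it by noting that $\mathcal{H}_{\until{}}$ is only ever used through a generalized intersection with the bounded reachable set, whose binary factors lie in $\{0,1\}$. Alternatively, one can intersect with the box $[0,1]^{2\windowlength}$, which changes nothing on binary points.
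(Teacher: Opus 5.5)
Your proposal is correct and takes essentially the same approach as the paper, whose proof also reads off the first $\windowlength-1$ rows of $[L_1, L_2]$ and rearranges them into the inequalities~\eqref{eq:until-subsq}. You are slightly more complete than the paper on two points: you explicitly match the last row to~\eqref{eq:until-event}, and you flag that $\mathcal{H}_{\until{}}$ is unbounded, since the column of $\xi^b_{\varphi_1,k+t_2}$ is identically zero, which the paper's H-rep definition technically forbids but which is harmless once the set is intersected with the bounded reachable set.
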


\begin{proof}
    By inspection of the structure in~\eqref{eq:until-cz} and by the definition of the H-rep polytope in~\eqref{eq:hrep}, the first $\windowlength - 1$ constraints enforced by the matrices in~\eqref{eq:until-cz-mats} can be written as
    \begin{align*}
        -\xi^b_{\varphi_1, k+t_1} - \xi^b_{\varphi_2, k+t_1} + \xi^b_{\varphi_2, k+t_1 + 1} &\leq 0\,,\\
        -\sum_{t=k+t_1}^{k+t_1+1}\left(\tfrac{1}{2}\xi^b_{\varphi_1, t} + \xi^b_{\varphi_2, t}\right) + \xi^b_{\varphi_2, k+t_1+2}  &\leq 0\,,\\
        &\vdots \\
        -\sum_{t=k+t_1}^{t_2-1}\left(\frac{\xi^b_{\varphi_1, t}}{\windowlength - 1} + \xi^b_{\varphi_2, t}\right) + \xi^b_{\varphi_2, k+t_2}  &\leq 0\,,
    \end{align*}
    which when rearranged, yields an inequality identical to those in~\eqref{eq:until-ineq}.
\end{proof}

\subsubsection{Eventually and Always}
The eventually operator, $\eventually{\tlleft t_1,t_2 \tlright}, \varphi$ requires that $\varphi$ holds true at least once within the time interval $\tlleft t_1,t_2 \tlright$, and the always operator, $\always{\tlleft t_1,t_2 \tlright} \varphi$, requires that $\varphi$ holds true at \textit{all} times within the interval $\tlleft t_1,t_2 \tlright$. While we have defined these operators in terms of the ``until" operator in~\eqref{eq:eventually-def} and~\eqref{eq:always-def}, the following proposition presents a simpler representation that requires fewer constraints. 
\begin{proposition}
    The MTL formulae for the eventually and always operators, can each be represented with a single linear constraint of the form
\begin{equation} \label{eq:eventually-ineq}
    \bm{\sigma} \models_k \eventually{\tlleft t_1, t_2 \tlright} \varphi \Leftrightarrow \sum_{j=t_1 + k}^{t_2 + k} \xi_{\varphi, j}^b \geq 1\;,
\end{equation}
for the eventually operator and 
\begin{equation}\label{eq:always-ineq}
    \bm{\sigma} \models_k \always{\tlleft t_1, t_2 \tlright} \varphi \Leftrightarrow \sum_{j=t_1+k}^{t_2+k} \xi_{\varphi, j}^b = \windowlength\,,
\end{equation}
for the always operator.
\end{proposition}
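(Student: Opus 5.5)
The plan is to prove each equivalence directly from the semantics in~\eqref{eq:mtl-def-eventually} and~\eqref{eq:mtl-def-always}, together with the encoding assumption $\bm{\sigma} \models_j \varphi \Leftrightarrow \xi^b_{\varphi,j} = 1$ and the fact that each $\xi^b_{\varphi,j} \in \{0,1\}$. This mirrors the proof of the boolean proposition rather than the until proposition. Going through~\eqref{eq:eventually-def} and~\eqref{eq:always-def} and the until encoding~\eqref{eq:until-ineq} would introduce unnecessary constraints and negations of sums, so I will not take that route.

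\emph{Eventually.} For the forward direction, $\bm{\sigma} \models_k \eventually{\tlleft t_1, t_2 \tlright}\varphi$ gives some $t \in \tlleft k+t_1, k+t_2 \tlright$ with $\bm{\sigma}\models_t \varphi$. Hence $\xi^b_{\varphi,t}=1$. Since every term of the sum is nonnegative, $\sum_{j=t_1+k}^{t_2+k}\xi^b_{\varphi,j} \geq \xi^b_{\varphi,t} = 1$.

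For the converse, suppose the sum is at least $1$. A sum of $\{0,1\}$-valued terms can be positive only if some term equals $1$, so there is a $t$ in the window with $\xi^b_{\varphi,t}=1$. This means $\bm{\sigma}\models_t\varphi$, which is exactly~\eqref{eq:mtl-def-eventually}.

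\emph{Always.} The sum has exactly $\windowlength = t_2 - t_1 + 1$ terms, each at most $1$, so $\sum_j \xi^b_{\varphi,j} \leq \windowlength$, with equality if and only if every term equals $1$. The equivalence then follows in both directions. If $\bm{\sigma}\models_t\varphi$ for all $t$ in the window, every $\xi^b_{\varphi,t}=1$ and the sum is $\windowlength$. Conversely, if the sum equals $\windowlength$, then no term can be $0$, since otherwise the sum would be at most $\windowlength-1$. Hence every $\xi^b_{\varphi,t}=1$ and $\bm{\sigma}\models_t\varphi$ throughout, which is~\eqref{eq:mtl-def-always}.

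There is no real obstacle here. The only points needing care are counting the window correctly, including the index shift by $k$, so that there are $\windowlength$ terms, and invoking integrality of the binaries. The equality case of the always constraint fails under the convex relaxation, but it holds for the binaries, and that suffices for the statement. To parallel Proposition~\ref{prop:boolean-hrep}, I would also remark that each constraint is a single row of an H-rep polytope in $\real^{\windowlength}$: the row $[-1,\dots,-1]$ with right-hand side $-1$ for eventually, and the equality row $[1,\dots,1]$ with right-hand side $\windowlength$ for always.
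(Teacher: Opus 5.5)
Your proof is correct and takes essentially the same direct route as the paper: from $\bm{\sigma}\models_j\varphi \Leftrightarrow \xi^b_{\varphi,j}=1$ and integrality of the binaries, the sum is at least $1$ iff some term is $1$, and it equals $\tau$ iff all $\tau$ terms equal $1$. Your explicit count of the $\tau$ terms is a bit more careful than the paper's version. One aside is backwards, although it does not affect the proof. The always equality does \emph{not} weaken under the convex relaxation, because $\sum_j \xi^b_{\varphi,j}=\tau$ with each $\xi^b_{\varphi,j}\in[0,1]$ still forces every term to equal $1$. It is the eventually inequality that admits fractional points, for example $\xi^b_{\varphi,j}=1/\tau$ for all $j$.
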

\begin{proof}
    From the definition of the eventually operator in~\eqref{eq:mtl-def}, there must be at least one time index in the interval $\tlleft k+t_1, k+t_2 \tlright$ where $\varphi$ holds true. Therefore, there must exist $i \in \tlleft k+t_1, k+t_2 \tlright$ s.t. $\xi^b_{\varphi, i} = 1$, implying that $\sum_{j=t_1 + k}^{t_2 + k} \xi_{\varphi, j}^b \geq 1$. Conversely, $\sum_{j=t_1 + k}^{t_2 + k} \xi_{\varphi, j}^b \geq 1$ with the fact that $\xi^b_{\varphi,j} \in \{0,1\}$ implies that $\exists i \in \tlleft k+t_1, k+t_2 \tlright$ s.t. $\xi^b_{\varphi, j} = 1$. Therefore, $\exists i \in \tlleft k+t_1, k+t_2 \tlright$ s.t. $\bm{\sigma} \models_i \varphi$ and $\eventually{\tlleft t_1,t_2\tlright} \varphi$. Similarly, from the definition of the always operator in~\eqref{eq:mtl-def}, $\varphi$ must hold true for all time indices in the interval $\tlleft k+t_1, k+t_2 \tlright$. That is, $\forall i\in\tlleft k+t_1, k+t_2 \tlright$, $\xi^b_{\varphi, i} = 1$. Therefore, $\sum_{j=t_1+k}^{t_2+k} \xi_{\varphi, j}^b = t_2 - t_1 + 1 = \windowlength$. Conversely, $\sum_{j=t_1+k}^{t_2+k} \xi_{\varphi, j}^b = \windowlength$ implies that $\forall i \in \tlleft k+t_1, k+t_2 \tlright$, $\xi^b_{\varphi, j} = 1$. Therefore, $\forall i \in \tlleft k+t_1, k+t_2 \tlright$, $\bm{\sigma} \models_i \varphi$ and $\always{\tlleft t_1,t_2\tlright} \varphi$.
\end{proof}
The feasible space of the eventually and always operators are again represented using H-rep polytopes.
\begin{proposition}
    The feasible space of the constraints provided for the eventually and always operators in~\eqref{eq:eventually-ineq} and~\eqref{eq:always-ineq} can be represented as H-rep polytopes of the form
    \begin{subequations}
    \begin{align}
        \mathcal{H}_{\eventually{}} &= \left<-\mathbf{1}_{1\times \windowlength}, -1, [\,], [\,]\right>_h \subset \real^{\windowlength}\label{eq:eventually-cz}\,,\\
        \mathcal{H}_{\always{}} &= \left<[\,], [\,], \mathbf{1}_{1\times \windowlength}, \windowlength\right>_h\label{eq:always-cz}\subset \real^{\windowlength}\,,
    \end{align}
    \end{subequations}
    such that $[\xi^b_{\varphi, k+t_1}, \dots, \xi^b_{\varphi, k+t_2}]^T = \bm{\xi} \in \mathcal{H}_{\eventually{}}$,~$\mathcal{H}_{\always{}}$.
\end{proposition}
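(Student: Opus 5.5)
The plan is to unfold the shorthand notation $\left<L, \bm{r}, A, \bm{b}\right>_h$ using the definition of an H-rep polytope in~\eqref{eq:hrep}, as in the proofs of Proposition~\ref{prop:boolean-hrep} and the until H-rep result. I will then check that the resulting linear constraint on $\bm{\xi} = [\xi^b_{\varphi, k+t_1}, \dots, \xi^b_{\varphi, k+t_2}]^T$ is exactly~\eqref{eq:eventually-ineq} or~\eqref{eq:always-ineq}. Combined with the preceding proposition, which established the equivalence between these linear constraints and satisfaction of $\eventually{\tlleft t_1,t_2\tlright}\varphi$ and $\always{\tlleft t_1,t_2\tlright}\varphi$, this shows that the sets encode the operators.

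For $\mathcal{H}_{\eventually{}}$, the equality data are empty, so only the inequality $L\bm{\xi} \le \bm{r}$ remains, with $L = -\mathbf{1}_{1\times\windowlength}$ and $r=-1$. Thus
\[
\mathcal{H}_{\eventually{}} = \left\{\bm{\xi}\in\real^{\windowlength} \;\middle|\; -\textstyle\sum_{i=1}^{\windowlength}\xi_i \le -1\right\}.
\]
Multiplying by $-1$ and reindexing $i \mapsto j = k+t_1+i-1$ recovers $\sum_{j=t_1+k}^{t_2+k}\xi^b_{\varphi,j}\ge 1$. The index range has exactly $\windowlength = t_2-t_1+1$ terms, which matches the dimension of the set.

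For $\mathcal{H}_{\always{}}$, the inequality data are empty and the equality is $\mathbf{1}_{1\times\windowlength}\bm{\xi} = \windowlength$. Under the same reindexing this reads $\sum_{j=t_1+k}^{t_2+k}\xi^b_{\varphi,j} = \windowlength$, which is~\eqref{eq:always-ineq}. Both directions of membership are immediate, since the set is literally defined by the constraint.

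I do not expect a serious obstacle; the only care needed concerns the binary restriction. As polytopes in $\real^{\windowlength}$, the sets contain non-binary points. The correspondence with MTL satisfaction only holds once $\bm{\xi}$ is restricted to $\{0,1\}^{\windowlength}$, and this restriction is inherited from the binary factors of the augmented map $\mathcal{M}'$ when the set is intersected with $\mathcal{Z}_{\text{reach}}$. I would note this explicitly rather than claim the polytope alone characterizes satisfaction. A lesser point is boundedness, which the definition of an H-rep polytope demands but $\mathcal{H}_{\eventually{}}$ and $\mathcal{H}_{\always{}}$ lack. I would remark that this is harmless because the generalized intersection identity only uses the constraint data, and the binary factors are bounded anyway.
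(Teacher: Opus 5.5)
Your proposal is correct and takes the same route as the paper. Both unfold the $\left<\cdot\right>_h$ shorthand via \eqref{eq:hrep} to get $\{\bm{\xi}\in\real^{\windowlength} \mid -\xi_1-\dots-\xi_{\windowlength}\le -1\}$ and $\{\bm{\xi}\in\real^{\windowlength} \mid \xi_1+\dots+\xi_{\windowlength}=\windowlength\}$, which are exactly \eqref{eq:eventually-ineq} and \eqref{eq:always-ineq} after reindexing. Your two extra remarks are accurate refinements the paper leaves implicit: the link to MTL satisfaction needs $\bm{\xi}$ binary, which is inherited from $\mathcal{M}'$, and these sets are in general unbounded, which is harmless because the slack bound $\bm{s}$ in the intersection identity depends only on the hybrid zonotope.
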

\begin{proof}
    From the definition of the H-rep polytope in~\eqref{eq:hrep}, \eqref{eq:eventually-cz} can be rewritten as
    \begin{align}
        \mathcal{H}_{\eventually{}} &= \left\{\bm{\xi} \in \real^{\windowlength}\,\middle|\,-\xi_1-\dots-\xi_{\windowlength} \leq -1\right\}\,,
    \end{align}
    which enforces~\eqref{eq:eventually-ineq}, thus constructing the feasible space for the binary generators associated with the eventually operator. The constrained zonotope in~\eqref{eq:always-ineq} can be similarly rewritten as
    \begin{align}
        \mathcal{H}_{\always{}} &= \left\{\bm{\xi} \in \real^{\windowlength}\,\middle|\,\xi_1+\dots+\xi_{\windowlength} = \windowlength\right\}\,,
    \end{align}
    where the linear equality constraint enforces~\eqref{eq:always-ineq}, thus constructing the feasible space for the binary generators associated with the always operator.
\end{proof}

\subsection{Collecting MTL Constraints}
Once the feasible space of a given MTL specification has been constructed and the forward reachable sets have been computed, we can build the feasible space of the lifted system that satisfies both the system dynamics and the MTL specification via set intersections.
\begin{theorem} \label{theorem:mtl-intersection}
    Given a lifted set of forward reachable sets $\mathcal{Z}_{\text{reach}}$ as constructed in Sec.~\ref{sec:reach}, a set $\mathcal{H_\varphi}$ representing the feasible space of specification $\varphi$, and matrix $R\in \{0,1\}^{n^{\mathcal{H}_{\varphi}} \times n^{\mathcal{Z}_{\text{reach}}}}$ defined as
    \begin{equation} \label{eq:R-def}
    R_{i,j} = \left\{ \begin{tabular}{l l}
        1, & if the $i^{\text{th}}$ dimension of $\mathcal{H}_{\varphi}$ corresponds\\
        &\hphantom{if }to the $j^{\text{th}}$ dimension of $\mathcal{Z}_{\text{reach}}$,\\
        0, & \text{otherwise},
    \end{tabular} \right.
    \end{equation}
    a lifted set $\mathcal{Z}_{\varphi}$ of forward reachable sets which satisfy specification $\varphi$ can be constructed through the set intersection
    \begin{equation}
        \mathcal{Z}_{\varphi} = \mathcal{Z}_{\text{reach}} \cap_{R} \mathcal{H}_{\varphi}\,.
    \end{equation}
\end{theorem}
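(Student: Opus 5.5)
The plan is to prove the set equality $\mathcal{Z}_{\varphi} = \{\mathbf{z} \in \mathcal{Z}_{\text{reach}} \mid \mathbf{z} \text{ is a trajectory with } \bm{\sigma} \models_0 \varphi\}$ by double inclusion, using the definition of the generalized intersection in~\eqref{eq:set-ops-intersection}. That definition gives directly
\[
\mathcal{Z}_{\text{reach}} \cap_R \mathcal{H}_{\varphi} = \left\{ \mathbf{z} \in \mathcal{Z}_{\text{reach}} \;\middle|\; R\mathbf{z} \in \mathcal{H}_{\varphi} \right\}.
\]
So the whole argument reduces to showing that, for $\mathbf{z} \in \mathcal{Z}_{\text{reach}}$, $R\mathbf{z} \in \mathcal{H}_{\varphi}$ holds exactly when the trajectory encoded by $\mathbf{z}$ satisfies $\varphi$.

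The first step is to characterize $R\mathbf{z}$. By~\eqref{eq:R-def}, each row of $R$ has a single unit entry, located in the column of $\mathcal{Z}_{\text{reach}}$ to which that dimension of $\mathcal{H}_{\varphi}$ corresponds. Hence $R$ is a selection (coordinate projection and reordering) matrix. For $\mathbf{z} = [\x_0^T, \bm{\xi}^{bT}_0, \u_0^T, \dots, \x_N^T, \bm{\xi}^{bT}_N]^T$, the product $R\mathbf{z}$ is therefore exactly the stacked vector of binary factors $\xi^b_{\pi,j}$ on which $\mathcal{H}_{\varphi}$ is defined in Propositions~\ref{prop:boolean-hrep} through the eventually/always proposition. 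Examples are $[\xi^b_{\varphi_1,k+t_1},\dots,\xi^b_{\varphi_2,k+t_2}]^T$ for $\mathcal{H}_{\until{}}$, and $[\xi^b_{\varphi,k+t_1},\dots,\xi^b_{\varphi,k+t_2}]^T$ for $\mathcal{H}_{\eventually{}}$ and $\mathcal{H}_{\always{}}$.

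The second step supplies the semantics. Every $\mathbf{z} \in \mathcal{Z}_{\text{reach}}$ has $[\x_k^T, \bm{\xi}^{bT}_k]^T \in \mathcal{M}'_k$, because each $\mathcal{X}'_k$ is built through intersection with $\mathcal{M}'_k$. Consequently the exposed binaries are genuinely in $\{0,1\}$ and encode proposition satisfaction via~\eqref{eq:state_in_roi}. Given this, the earlier propositions apply componentwise:
\begin{itemize}
\item for binary arguments, membership in $\mathcal{H}_\pi$, $\mathcal{H}_\neg$, $\mathcal{H}_\wedge$ or $\mathcal{H}_\vee$ is equivalent to~\eqref{eq:boolean-ineqs};
\item membership in $\mathcal{H}_{\until{}}$ is equivalent to~\eqref{eq:until-ineq};
\item membership in $\mathcal{H}_{\eventually{}}$ or $\mathcal{H}_{\always{}}$ is equivalent to~\eqref{eq:eventually-ineq}--\eqref{eq:always-ineq}.
\end{itemize}
These in turn are equivalent to $\bm{\sigma}\models_k\varphi$ by the corresponding representation propositions. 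Chaining the equivalences gives $R\mathbf{z}\in\mathcal{H}_\varphi \Leftrightarrow \bm{\sigma}\models\varphi$ for every $\mathbf{z}\in\mathcal{Z}_{\text{reach}}$, which yields both inclusions at once. For a compound specification built from several operators, I would either take $\mathcal{H}_\varphi$ as a Cartesian product of the operator polytopes with a stacked $R$, or apply the intersection repeatedly. Either way the argument is the same, since successive generalized intersections simply conjoin the constraints.

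Finally, to conclude that $\mathcal{Z}_\varphi$ is again a hybrid zonotope usable in~\eqref{eq:generic-lqr}, I would invoke the first proposition (the hybrid zonotope / H-rep intersection identity) with this $R$.

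The main obstacle is the second step. The H-rep sets are defined over $\real^{n}$, not over binaries, so the equivalence with MTL satisfaction only holds once we know the selected coordinates are binary. This must be justified from the construction of $\mathcal{Z}_{\text{reach}}$ through the augmented map~\eqref{eq:hz-augment}, in which the exposed coordinates are exactly binary factors $\bm{\xi}^b$ via the $I_{n_b}$ block.
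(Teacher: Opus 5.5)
Your proposal is correct and follows the paper's own argument. Both expand $\mathcal{Z}_{\text{reach}} \cap_R \mathcal{H}_{\varphi}$ via the definition of the generalized intersection as $\{\mathbf{z} \in \mathcal{Z}_{\text{reach}} \mid R\mathbf{z} \in \mathcal{H}_{\varphi}\}$, observe that $R$ only selects the coordinates on which $\mathcal{H}_{\varphi}$ is defined, and read off satisfaction of $\varphi$. Your extra points spell out what the paper's two-line proof leaves implicit: the selected coordinates lie in $\{0,1\}$ because of the $I_{n_b}$ block in the augmented map, so the earlier operator propositions apply; and conjunctions are handled by stacking or repeated intersection, as in the subsequent corollary.
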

\begin{proof}
    From the definition of the generalized intersection in~\eqref{eq:set-ops-intersection}, $\mathcal{Z}_\varphi$ can be equivalently expressed as $\mathcal{Z}_\varphi = \left\{\z \in \mathcal{Z}_{\text{reach}}\middle| R \z \in \mathcal{H}_\varphi\right\}$ where $R \z$ yields the projection of a point $\z \in \mathcal{Z}_{\text{reach}}$ onto the dimensions associated with $\mathcal{H}_\varphi$. Therefore, $\mathcal{Z}_\varphi$ yields the set which belongs to $\mathcal{Z}_{\text{reach}}$ which also satisfies specification $\varphi$.
\end{proof}
\begin{corollary}
    For an MTL specification expressed in conjunctive normal form $\varphi = \bigwedge_{j=1}^{n_\varphi} \varphi_j$, we can repeatedly apply set intersections to enforce all specifications as
    \begin{equation}
        \mathcal{Z}_{\text{feas}} = \mathcal{Z}_{\text{reach}} \bigcap_{j=1}^{n_\varphi} {}_{R_j} \mathcal{H}_{\varphi_j}\,.\label{eq:reach-mtl-intersect}
    \end{equation}
\end{corollary}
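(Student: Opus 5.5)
The plan is induction on the number of conjuncts $n_\varphi$, with Theorem~\ref{theorem:mtl-intersection} as the engine. First I fix the meaning of~\eqref{eq:reach-mtl-intersect}. It is the iterated generalized intersection $\mathcal{Z}^{(0)} = \mathcal{Z}_{\text{reach}}$, $\mathcal{Z}^{(j)} = \mathcal{Z}^{(j-1)} \cap_{R_j} \mathcal{H}_{\varphi_j}$, with $\mathcal{Z}_{\text{feas}} = \mathcal{Z}^{(n_\varphi)}$. Here each $R_j$ is built as in~\eqref{eq:R-def} from the dimensions of $\mathcal{H}_{\varphi_j}$.

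The key observation is that generalized intersection with an H-rep polytope does not change the ambient dimension of the set. By~\eqref{eq:set-ops-intersection}, $\mathcal{Z}^{(j)} \subseteq \mathcal{Z}^{(j-1)} \subset \real^{n^{\mathcal{Z}_{\text{reach}}}}$. Equivalently, the identity in the first Proposition only appends constraints and continuous factors and leaves the output rows $G^c, G^b, \bm{c}$ unchanged. Hence the same coordinate selection $R_j$, defined relative to $\mathcal{Z}_{\text{reach}}$, stays valid at every stage, and each $\mathcal{Z}^{(j)}$ is again a hybrid zonotope.

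With that in hand, I unroll the definition to get
\[
\mathcal{Z}_{\text{feas}} = \left\{ \z \in \mathcal{Z}_{\text{reach}} \;\middle|\; R_j \z \in \mathcal{H}_{\varphi_j},\ j = 1,\dots,n_\varphi \right\}.
\]
This follows by a trivial induction on $j$. By Theorem~\ref{theorem:mtl-intersection} and the earlier propositions representing each operator by an H-rep polytope, $R_j \z \in \mathcal{H}_{\varphi_j}$ holds exactly when the trajectory encoded by $\z$ satisfies $\varphi_j$. Then~\eqref{eq:mtl-def-and}, applied at the appropriate time index, gives $\bm{\sigma} \models \varphi_j$ for all $j$ if and only if $\bm{\sigma} \models \bigwedge_j \varphi_j$. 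This yields both inclusions, so $\mathcal{Z}_{\text{feas}}$ is precisely the set of reachable lifted trajectories satisfying $\varphi$.

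The main obstacle is the well-definedness and dimension bookkeeping, not the logic. I need to argue that the intersections commute, and that the order of applying the $\mathcal{H}_{\varphi_j}$ does not matter. Both follow from the set-builder form above, since the conditions are a plain conjunction of membership constraints on $\z$. I also need each $\varphi_j$ to be of a form covered by the preceding propositions (Boolean, until, eventually, always on exposed binary factors), so that $\mathcal{H}_{\varphi_j}$ exists. I will state this as the standing assumption implicit in ``conjunctive normal form'' here.
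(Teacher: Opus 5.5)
Your proposal is correct and follows the paper's own route. The paper likewise sets $\mathcal{Z}_{\varphi_0} = \mathcal{Z}_{\text{reach}}$, applies Theorem~\ref{theorem:mtl-intersection} repeatedly via $\mathcal{Z}_{\varphi_j} = \mathcal{Z}_{\varphi_{j-1}} \cap_{R_j} \mathcal{H}_{\varphi_j}$, and concludes from $\mathcal{Z}_{\varphi_{n_\varphi}} \subseteq \mathcal{Z}_{\text{reach}} \cap_{R_j} \mathcal{H}_{\varphi_j}$ for every $j$; your dimension bookkeeping (which lets each $R_j$ be reused) and your reverse inclusion (which makes $\mathcal{Z}_{\text{feas}}$ exactly the satisfying set rather than only a sound subset) are details the paper leaves implicit.
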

\begin{proof}
    For specification $\varphi = \bigwedge_{j=1}^{n_\varphi} \varphi_j$, repeated application of Theorem~\ref{theorem:mtl-intersection} with $\mathcal{Z}_{\varphi_0} = \mathcal{Z}_{\text{reach}}$ yields
    \begin{subequations}
        \begin{equation}
            \mathcal{Z}_{\varphi_{j}} = \mathcal{Z}_{\varphi_{j-1}} \cap_{R_{j}} \mathcal{H}_{\varphi_{j}}\,, \forall j\in\{1,\dots,n_\varphi\}\,.
        \end{equation}
    \end{subequations}
    A trajectory in $\mathcal{Z}_{\varphi_{n_\varphi}}$ is guaranteed to satisfy each specification $\varphi_j$ as $\mathcal{Z}_{\varphi_{n_\varphi}} \subseteq \mathcal{Z}_{\text{reach}} \cap_{R_j} \mathcal{H}_{\varphi_j}$, $\forall j \in \{1,\dots,n_\varphi\}$.
\end{proof}

The additional complexity in the set representation incurred by each intersection in~\eqref{eq:reach-mtl-intersect} is summarized in Table~\ref{tab:added-complexity}. For all operations except ``until", up to one constraint and one continuous generator are introduced. For ``until", an additional $\tau$ continuous generators and $\tau$ constraints are introduced. For all operators, no additional binary generators are required. 
\begin{table}[]
    \centering
    \caption{Added representational complexity resulting from intersections of a hybrid zonotope, $\mathcal{Z}_{\text{\MakeLowercase{reach}}}$, with H-rep polytope representations of MTL operators in~\eqref{eq:reach-mtl-intersect}.}
    \begin{tabular}{|c|c|c|c|}
         Operator & \multicolumn{3}{c}{Additional Complexity} \vline\\
         & $n_g$ & $n_b$ & $n_c$ \\ \hline
         $\pi$, $\neg$, $\wedge$ & $0$ & $0$ & $1$\\
         $\vee$ & $1$ & $0$ & $1$\\
         $\until{\tlleft t_1, t_2 \tlright}$ & $\windowlength$ & $0$ & $\windowlength$\\
         $\eventually{\tlleft t_1, t_2 \tlright}$ & $1$ & $0$ & $1$\\
         $\always{\tlleft t_1, t_2 \tlright}$ & $0$ & $0$ & $1$
    \end{tabular}
    \label{tab:added-complexity}
\end{table}

\subsection{Optimization Formulation}
To capture region-dependent disturbances, we modify~\eqref{eq:generic-lqr} as
\begin{subequations}
\begin{align}
    \u^* = \argmin_{\u}& \norm{\x_N}^2_{Q_N} + \sum_{k=0}^{N-1} \norm{\x_k}^2_Q + \norm{\u_k}^2_R\,, \\
    \text{s.t. }& \begin{bmatrix}
        \x_0 \\ \bm{\xi}^b_0
    \end{bmatrix} \in \mathcal{X}'_0\,, \; \begin{bmatrix}
        \x_k \\ \bm{\xi}^b_k
    \end{bmatrix} \in \mathcal{M}'_k\,,\; \u_k \in \mathcal{U}\,,\\
    &\x_{k+1}  = A \x_k + B \u_k + W \bm{\xi}^b_k \,,\\
    &\x \models_0 \varphi \,.
\end{align}
\end{subequations}
Generating a trajectory of optimal states and inputs in the non-lifted space that satisfies the MTL specification is equivalent to finding an optimal point within the lifted hybrid zonotope, $\mathcal{Z}_{\text{feas}}$. Therefore, we generate an MIQP of the form
\begin{subequations}\label{eq:miqp-lifted}
\begin{align}
    \z^* &= \argmin_{\z} \norm{\z}^2_P\,,\\
    \text{s.t. } \z &\in \mathcal{Z}_{\text{feas}}\,,
\end{align}
\end{subequations}
where
\begin{equation}
    P = \mathtt{blkdiag}\left(Q,\mathbf{0},R,Q,\mathbf{0},\dots,R, Q_N, \mathbf{0}\right)\,.
\end{equation}
The optimization problem in~\eqref{eq:miqp-lifted} can be solved by a variety of MIQP solvers, such as Gurobi~\cite{gurobi} or SCIP~\cite{scip}.
\section{Simulation Examples} \label{sec:examples}
This section presents three numerical examples to demonstrate the proposed motion planning framework. 
The first example is an existing problem from the literature where an autonomous agent must obtain keys to open doors and reach a goal region.
The proposed method is benchmarked with the methods in~\cite{Kurtz2022}. 
The second example is inspired by the traveling salesperson problem and an example from~\cite{Kurtz2022}, where the agent must visit moving targets. The third example focuses on energy-aware autonomy, where the agent must visit charging stations to replenish an energy state and is subject to region-dependent disturbances in its motion states. 
All numerical examples were implemented on a desktop computer with an Intel\textsuperscript{\textregistered} Core\textsuperscript{\texttrademark} i7-14700 processor with 28 parallelized threads and 32GB of RAM, running Ubuntu 24.04. The optimization problems are of the form in~\eqref{eq:miqp-lifted}, constructed using ZonoOpt~\cite{robbins2025sparsitypromotingreachabilityanalysisoptimization}, and solved with Gurobi~\cite{gurobi} with a relative tolerance of 1\% (unless otherwise stated).

\subsection{Door-key Problem} \label{sec:example-door-key}
The first example is modified from~\cite[Example 4]{Kurtz2022} and used to benchmark the proposed approach. In this scenario, the agent must reach a goal region, $\mathcal{G}$. However, two door regions, $\mathcal{D}_1$ and $\mathcal{D}_2$, block a corridor to the goal and cannot be entered until the agent collects the appropriate keys, $\mathcal{K}_1$ and $\mathcal{K}_2$. The goal, door, and key regions are represented by green, red, and blue rectangles in Fig.~\ref{fig:dk25}, respectively. The gray regions in Fig.~\ref{fig:dk25} represent keep-out zones. The system model is of the form~\eqref{eq:lti} with states $\x = [x, y, \dot{x}, \dot{y}]^T$ where $x$ and $y$ are the position states, inputs $\u = [\ddot{x}, \ddot{y}]^T$, system matrices
\begin{equation} \label{eq:dbl-int-dyn-simple}
    A = \begin{bmatrix}
        1 & 0 & 1 & 0 \\
        0 & 1 & 0 & 1 \\
        0 & 0 & 1 & 0 \\
        0 & 0 & 0 & 1         
    \end{bmatrix},\; B = \begin{bmatrix}
        0 & 0\\
        0 & 0\\
        1 & 0\\
        0 & 1
    \end{bmatrix},\; \w_k = \mathbf{0}\,,
\end{equation} 
and cost matrices
\begin{equation}
    Q = Q_N = \mathtt{diag}\left([0,0,1,1]\right)\,,\;R = I_2\,. \label{eq:Q-R-door-key}
\end{equation} 
This problem is slightly modified from~\cite{Kurtz2022} to require that the agent reach $\mathcal{G}$ at the \textit{end} of its horizon. The door-key problem can be described by the MTL specification
\begin{equation} \label{eq:spec-door-key}
    \varphi_{A} = \eventually{\tlleft N\tlright} \mathcal{G} \wedge \left(\bigwedge_{i=1}^2 \neg \mathcal{D}_i \until{\tlleft0,N\tlright} \mathcal{K}_i\right)\,.
\end{equation}

We use a combination of a DCP and an NCP to represent the non-convex free space, doors, goal, and key regions as a hybrid zonotope with 24 continuous generators, 10 binary generators, and 13 constraints. The door regions require a DCP to enforce that the agent may not enter until it has visited the key regions. The remaining non-convex free space can be constructed using an NCP as there are no further entry restrictions for these regions. In Fig.~\ref{fig:dk25}, the convex polytopes which tile the background of this example are shown as the hatched regions.
The proposed method produces visually identical trajectories as when the robustness terms in the cost function of~\cite{Kurtz2022} are disabled. Fig.~\ref{fig:dk25} shows the solution for $N=25$ with Gurobi's relative optimality gap parameter set to 1\%. 

The representational complexity and solution time for several values of $N$ and two values for the relative optimality gap are summarized in Fig.~\ref{fig:comparison-door-key} and Table~\ref{tab:comparison-door-key}. An optimality gap of 50\% is included to illustrate how quickly a ``reasonable" solution can be found; this is particularly useful when routinely performing these calculations online with restricted computation time. Furthermore, this optimality gap is based on what can be proven by the solver. The true sub-optimality of a given solution may be much less than that the solver's verified optimality gap prior to convergence.  Fig.~\ref{fig:comparison-door-key} shows
the number of binary variables \textit{after} Gurobi's default pre-solve routine. 
For all cases, the proposed method results in an optimization problem with roughly an order of magnitude fewer binary variables than the methods in~\cite{Kurtz2022}, which uses the big M method. For all but three scenarios, the proposed method has an improved computation time.

\begin{figure}[htb]
    \centering
    \input{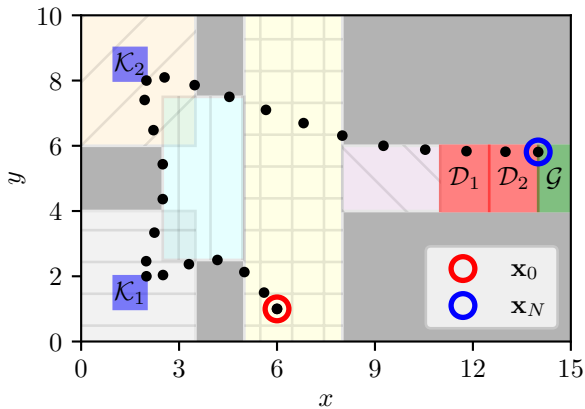}
    \caption{Door-key problem where the agent must reach the goal region, $\mathcal{G}$. However, the two door regions, $\mathcal{D}_1$ and $\mathcal{D}_2$, cannot bet entered until the agent visits the corresponding key regions, $\mathcal{K}_1$ and $\mathcal{K}_2$. The hatched regions show the polytopic partition and the gray regions represent obstacles.}
    \label{fig:dk25}
\end{figure}

\begin{figure}
    \centering
    \input{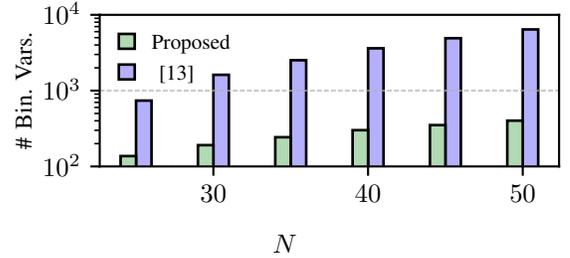}
    \caption{Comparison of problem sizes for the door-key problem.}
    \label{fig:comparison-door-key}
\end{figure}
\begin{table}[htb]
    \centering
    \caption{Comparison of solution times for the door-key problem.}
    \begin{tabular}{|c|c|c|c|c|}
         & \multicolumn{4}{c}{Solution Time [s]} \vline \\ 
         $N$& \multicolumn{2}{c}{50\% Optimal} \vline & \multicolumn{2}{c}{1\% Optimal} \vline \\
         &  Ours & \cite{Kurtz2022} & Ours & \cite{Kurtz2022} \\ \hline
         {25} & {0.85} & \textbf{0.51} & {0.88} & \textbf{0.63}\\
         {30} & \textbf{2.12} & {2.74} & \textbf{4.05} & {5.96}\\
         {35} & \textbf{1.57} & {5.51} & \textbf{5.61} & {10.94}\\
         {40} & \textbf{3.13} & {15.54} & {34.80} & \textbf{31.71}\\
         {45} & \textbf{5.09} & {37.49} & \textbf{151.60} & {166.27}\\
         {50} & \textbf{13.71} & {85.40} & \textbf{112.37} & {2696.67}
    \end{tabular}
    \label{tab:comparison-door-key}
\end{table}

\subsection{Traveling Salesperson Problem} \label{sec:example-tsp}
The second numerical example is inspired by~\cite[Example 3]{Kurtz2022} and the traveling salesperson problem. 
There are targets of five different colors with two regions per color, $\mathcal{C}_i$, as shown in Fig.~\ref{fig:tsp-dvd}. To demonstrate the ability to accommodate time-varying maps, each target moves at a fixed velocity and rebounds off the boundaries of map. The objective is to visit at least one region of each color. 
We use the same dynamic model and cost matrices as in Sec.~\ref{sec:example-door-key}.

The required behavior can be described with the MTL specification
\begin{equation}
    \varphi_{B} = \bigwedge_{i=1}^5 \Diamond_{\tlleft0,N\tlright} \mathcal{C}_i\,.
\end{equation}
This example uses an NCP as there are no keep-out zones or areas with restricted entry, resulting in a hybrid zonotope map with 12 continuous generators, 11 binary generators, and 7 constraints at each time step. For $N=25$, the MIP uses 192 continuous variables, 234 binary variables, and takes approximately 0.75 seconds to compute. The resulting trajectory is shown in Fig.~\ref{fig:tsp-dvd}. 

\begin{figure*}[htb]
    \centering
    \input{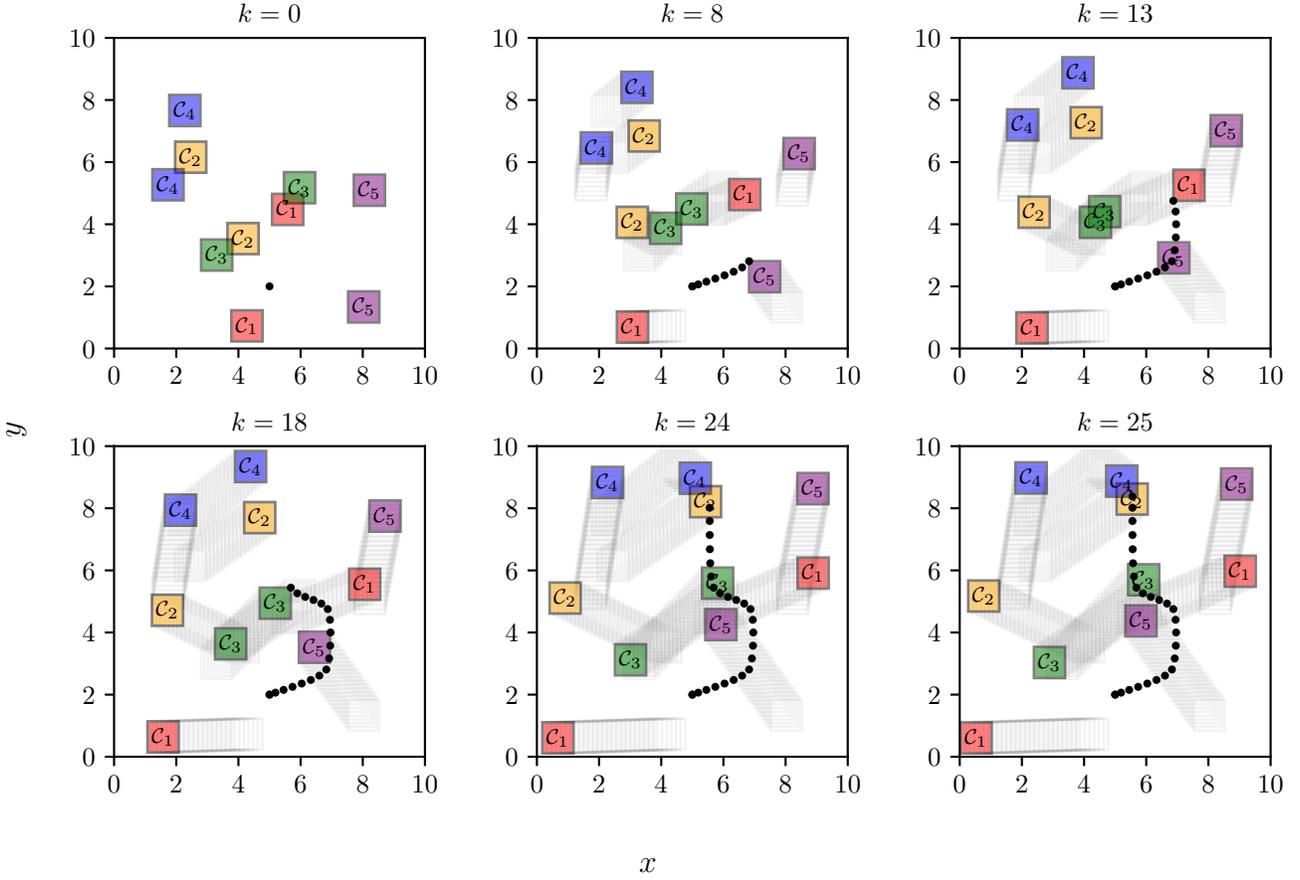}
    \caption{Modified traveling salesperson problem where the agent must visit at least one region of each color. The colored regions are moving with constant velocity but rebound off the boundaries of the map. The agent visits a purple region at $k=8$, a red region at $k=13$, a green region at $k=18$, a yellow region at $k=24$, and a blue region at $k=25$.}
    \label{fig:tsp-dvd}
\end{figure*}

\subsection{Energy-aware Problem} \label{sec:example-energy}
The third example focuses on the context of energy-aware autonomy. The autonomous vehicle is assumed to be an aerial drone and must navigate to a collection of target sets while subject to region-dependent wind disturbances and energy constraints. The system model is similar to that of the previous examples; however, the state vector in~\eqref{eq:dbl-int-dyn-simple} is augmented with an energy state, $e$, representing a state of charge (SOC). Therefore, the system has states $\x = [x, y, \dot{x}, \dot{y}, e]^T$, inputs $\u = [\ddot{x}, \ddot{y}]^T$, and system matrices
\begin{equation}
    A = \begin{bmatrix}
        1 & 0 & 1 & 0 & 0\\
        0 & 1 & 0 & 1 & 0 \\
        0 & 0 & 1 & 0 & 0\\
        0 & 0 & 0 & 1 & 0\\
        0 & 0 & 0 & 0 & 1
    \end{bmatrix},\; B = 
    \begin{bmatrix}
        0 & 0\\ 0 & 0\\ 1 & 0\\ 0 & 1\\ 0 & 0
    \end{bmatrix},\; \w_k = 
    \begin{bmatrix}
        W \\ \mathbf{0}_{2\times n_b} \\\mathbf{d}^T
    \end{bmatrix} \bm{\xi}^b_k,
\end{equation}
where $W = [\windvec_1, \windvec_2, \dots, \windvec_{30}]$ and $\mathbf{d}^T = [d_1, d_2, \dots, d_{30}]$ capture region-dependent disturbances representing the wind and power effects, respectively. These dependencies are enforced through a linear transformation of the binary variables associated with the map, enabled by the augmentation in~\eqref{eq:hz-augment}. Each wind vector $\windvec_i\in\real^2$ has a magnitude of $0.2$ and points nominally perpendicular to the center of the map in a counter clockwise fashion with a $\pm15^\circ$ random variation. There are three regions that must be visited, $\mathcal{G}_i$, four keep-out areas, $\mathcal{O}_i$, and two regions that recharge the energy state of the system at a rate of $20$\% per time step ($d_i = 0.2$), $\mathcal{C}_i$. Every region that is not one of the charging regions depletes the SOC at a rate of $10$\% per time step ($d_i = -0.1$). The SOC is constrained to not fully deplete or overcharge, i.e., $0 \leq e_k \leq 1$. 

We consider two scenarios that show how the optimized trajectory changes when a restricted arrival window is applied to one of the targets. In the first scenario, the agent may visit the targets in any order within the $25$ step horizon. The MTL specification describing this requirement is
\begin{equation} \label{eq:mtl-c1}
    \varphi_{C1} = \left(\bigwedge_{i=1}^3 \eventually{\tlleft0,25\tlright} \mathcal{G}_i\right) \wedge \left(\bigwedge_{i=1}^4 \always{\tlleft0,25\tlright} \neg \mathcal{O}_i\right) \,.
\end{equation}
In the second scenario, the third target must be visited within the first $12$ time steps, which could indicate a high-priority or time-sensitive objective. The horizon for this scenario is extended to $35$ steps to accommodate the increased path length of the agent. This requirement is described by the MTL specification
\begin{equation} \label{eq:mtl-c2}
    \varphi_{C2} = \left(\bigwedge_{i=1}^2 \eventually{\tlleft0,35\tlright} \mathcal{G}_i\right) \wedge \eventually{\tlleft0,12\tlright} \mathcal{G}_3 \wedge \left(\bigwedge_{i=1}^4 \always{\tlleft0,35\tlright} \neg \mathcal{O}_i\right) \,.
\end{equation}

This example uses a DCP, resulting in a hybrid zonotope map with $12$ continuous generators, $30$ binary generators, and $7$ constraints. The overall MIP uses 175 continuous variables and 439 binary variables for the first scenario, and 245 continuous variables and 698 binary variables for the second scenario.
The first scenario takes an approximately $12.4$ seconds to compute and the second scenario takes approximately $21.5$ seconds to compute. The resulting trajectories are shown in Fig.~\ref{fig:wind}. Under~\eqref{eq:mtl-c1}, the agent visits $\mathcal{G}_2$, $\mathcal{G}_3$, then $\mathcal{G}_1$ while recharging between goal regions. Under~\eqref{eq:mtl-c2}, there is not enough time to visit $\mathcal{G}_2$ first, so the agent must circle back through the middle corridor.

\begin{figure*}[htb]
    \centering
    \input{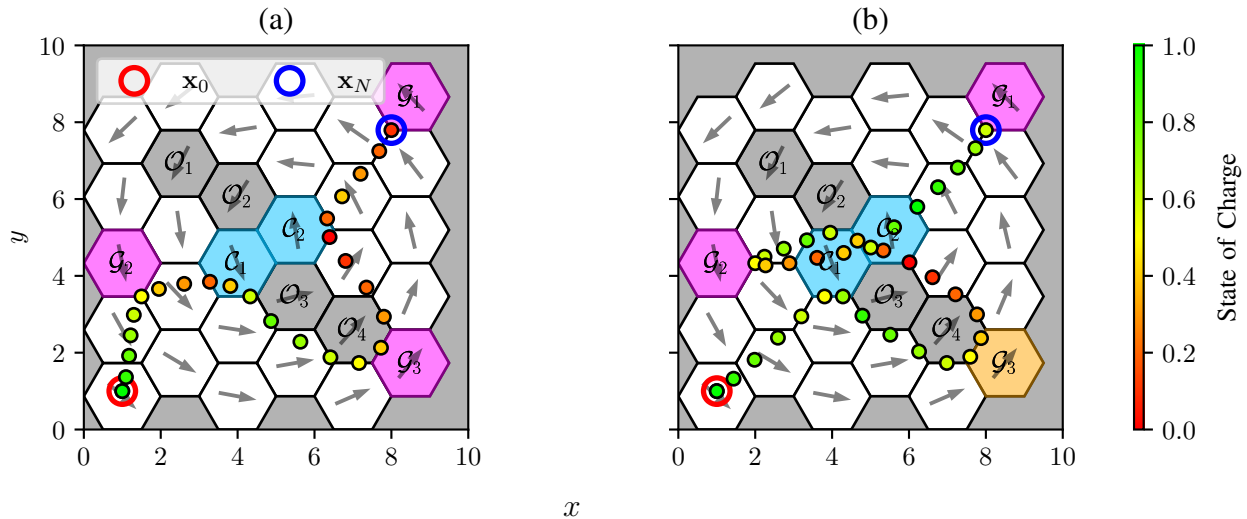}
    \caption{Energy-aware problem where the agent must visit three goal regions, $\mathcal{G}_i$. In (a), with no timing constraints on when regions are visited, the agent first visits $\mathcal{G}_2$, recharges, visits $\mathcal{G}_3$, recharges again, and then visits $\mathcal{G}_1$. In (b), the agent must visit $\mathcal{G}_3$ within the first 12 time steps and does not have enough time to visit $\mathcal{G}_2$ on its way to $\mathcal{G}_3$, so it must circle back.}
    \label{fig:wind}
\end{figure*}
\section{Experimental Application} \label{sec:experiment}
This section describes an experimental application of proposed method. This was implemented using a combination of C++ and Python within a ROS2 framework on a computer with identical specifications as in the previous section. The optimization problem in~\eqref{eq:miqp-lifted} was again constructed in ZonoOpt, but solved using CVXPY~\cite{cvxpy} with SCIP~\cite{scip} as the underlying solver.
The test platforms are a Husarion ROSbot 3 and a Clearpath Robotics Turtlebot 4, both of which are small differential drive robots. Optitrack Prime$^\text{X}$22 motion capture cameras were used to provide state feedback. 
\subsection{Mission Scenario}
This experiment is inspired by last mile delivery problems, which concern the delivery of packages in the final leg of a supply chain.
Specifically, two agents must coordinate their movement to deliver packages to four locations. One agent represents a slow-moving cargo truck, which can carry all four packages. The second agent represents a fast-moving delivery vehicle that can only carry one package at a time. The delivery vehicle must rendezvous with the cargo truck to pick up each package before delivering it to one of the locations. The platform used for the delivery vehicle is the ROSbot 3 and the platform used for the cargo truck is the Turtlebot 4, as the Turtlebot has a lower maximum speed than the ROSbot.
\subsection{Vehicle Models}
Each vehicle is modeled by unicycle dynamics with first-order speed and turn rate inputs as
\begin{equation} \label{eq:unicycle}
    \dot{x} = v \cos(\theta)\,,\; \dot{y} = v \sin(\theta)\,,\; \dot{\theta} = \omega\,,
\end{equation}
where $x$ and $y$ are the position states, $\theta$ is the heading angle, and $v$ and $\omega$ are the linear and angular velocity inputs, respectively. It has been shown that~\eqref{eq:unicycle} accurately describes differential drive robots~\cite{becker2014controlling}. The unicycle model is differentially flat with respect to $x$ and $y$~\cite{diffflat}, enabling planning with a double integrator model of the form~\eqref{eq:lti} with states $\x= [x, y, \dot{x}, \dot{y}]^T$, inputs $\u = [\ddot{x}, \ddot{y}]$, and system matrices 
\begin{align}
    A &= \begin{bmatrix}
        1 & 0 & \Delta t & 0 \\
        0 & 1 & 0 & \Delta t \\
        0 & 0 & 1 & 0 \\
        0 & 0 & 0 & 1 \\
    \end{bmatrix}\,,
    &B = \begin{bmatrix}
        \tfrac{1}{2}\Delta t^2 & 0 \\
        0 & \tfrac{1}{2}\Delta t^2 \\
        \Delta t & 0 \\
        0 & \Delta t \\
    \end{bmatrix}\,,
    & \w_k = \mathbf{0}\,,
\end{align}
where $\Delta t$ is the discretization period, set to 1.5 seconds in this experiment. The states and inputs of the nonlinear model in~\eqref{eq:unicycle}  can be reconstructed from trajectories of the linear system as
\begin{equation}
    v = \sqrt{\dot{x}^2 + \dot{y}^2}\,,\; \theta = \mathtt{atan2}\left(\dot{y},\dot{x}\right)\,,\; \omega = \frac{\dot{x}\ddot{y}-\dot{y}\ddot{x}}{\dot{x}^2 + \dot{y}^2}\,.
\end{equation}
\subsection{Constraints and MTL Specification}
The state and input constraints for each vehicle are defined as
\begin{subequations} \label{eq:vehicleconstraints}
    \begin{align}
        \underline{x} \leq x &\leq \overline{x}\,,\\
        \underline{y} \leq y &\leq \overline{y}\,,\\ 
        \sqrt{\dot{x}^2 + \dot{y}^2} &\leq v_{\max}\,,\label{eq:vmax}\\
        \sqrt{\ddot{x}^2 + \ddot{y}^2} &\leq v_{\min} \omega_{\max}\,, \label{eq:amax}
    \end{align}
\end{subequations}
where $\underline{x}$, $\overline{x}$, $\underline{y}$, and $\overline{y}$ define the boundaries of the rectangular free space, $v_{\max}$ and $\omega_{\max}$ are the vehicle's maximum linear and angular velocities, and $v_{\min}$ is a minimum speed used to conservatively limit the turn rate of the vehicle, as in~\cite{Whitaker2021}. For the cargo truck, $v_{\max} = 0.2 \tfrac{\text{m}}{\text{s}}$, $\omega_{\max} = 1.89\tfrac{\text{rad}}{\text{s}}$, and $v_{\min} = 0.05 \tfrac{\text{m}}{\text{s}}$. For the delivery vehicle, $v_{\max} = 1.0 \tfrac{\text{m}}{\text{s}}$, $\omega_{\max} = 3.14\tfrac{\text{rad}}{\text{s}}$, and $v_{\min} = 0.05 \tfrac{\text{m}}{\text{s}}$. The inequalities in~\eqref{eq:vmax} and~\eqref{eq:amax} are approximated using an inscribed regular hexagon and represented as a zonotope. The boundaries of the testing area are $\underline{x} = 0$ m, $\overline{x} = 6.27$ m, $\underline{y} = 0$ m, and $\overline{y} = 10.62$ m. 

To create a centralized path planner that governs the states of both vehicles and package exchanges, we construct an augmented state vector that contains the kinematic states of each vehicle, $\x_1$ and $\x_2$, a binary state to track whether the delivery vehicle currently possesses a package, $p$, and the relative or residual position of the two vehicles, $\rvec = [x_1 - x_2,\, y_1 - y_2]^T$. That is, $\tilde{\x} = [\x_1^T, \x_2^T, p, \rvec^T]^T$, $\tilde{\u} = [\u_1^T, \u_2^T]^T$, with system matrices
\begin{align}
\begin{split}
    \tilde{A} &= \begin{bmatrix}
        A & \mathbf{0} & \mathbf{0} & \mathbf{0}\\
        \mathbf{0} & A & \mathbf{0} & \mathbf{0}\\
        \mathbf{0} & \mathbf{0} & 0 & \mathbf{0}\\
        H A & -H A & \mathbf{0} & \mathbf{0}
    \end{bmatrix},\; \tilde{B} = \begin{bmatrix}
        B & \mathbf{0}\\
        \mathbf{0} & B\\
        \mathbf{0} & \mathbf{0}\\
        H B & -H B
    \end{bmatrix},\\
    \w_k &= \begin{bmatrix}
        \mathbf{0}_{8 \times n_b}\\
        \mathbf{d}^T\\
        \mathbf{0}_{2 \times n_b}\\
    \end{bmatrix} \bm{\xi}^b_k\,,
\end{split}
\end{align}
where $H = [I_2\; \mathbf{0}_{2\times2}]$ extracts the position of each vehicle from its state vector and $\mathbf{d} = [d_1, \dots, d_j, \dots, d_{n_b}]^T\in\real^{n_b}$ increments $p$ when the relative position between the vehicles is sufficiently close or decrements $p$ when the delivery vehicle visits a delivery location. That is, $d_j = 1$ if $\xi^b_j$ is the generator associated with the package exchange area, $d_j = -1$ if $\xi^b_j$ is the generator associated with the any of the delivery locations, and $d_j = 0$ otherwise.

In addition to the constraints in~\eqref{eq:vehicleconstraints}, the package delivery truck can only carry one package at a time, so $0 \leq p \leq 1$. The constraints on the residual states are defined such that the two vehicles are assumed to be able to exchange packages when the delivery vehicle is within an outer radius, $\overline{r}$, of the cargo truck. However, the delivery vehicle is not allowed within an inner radius, $\underline{r}$, of the truck as to prevent collision. This non-convex constraint set $\mathcal{R}$ is defined as
\begin{equation}
    \mathcal{R} = \mathcal{R}^C_i \cap \left(\left(\mathcal{S} \oplus (-\mathcal{S)}\right)\cup \mathcal{R}_o\right)\,,
\end{equation}
where $\mathcal{R}^C_i$ is the complement of the circumscribed regular hexagon with respect to $\underline{r}$, $\mathcal{S} = [\underline{x}, \overline{x}] \times [\underline{y}, \overline{y}]$, and $\mathcal{R}_o$ is the inscribed regular polygon with respect to $\overline{r}$. This constraint is represented as a hybrid zonotope, using a combination of a DCP and an NCP with 30 continuous generators, 10 binary generators, and 20 constraints. $\mathcal{R}$ is depicted graphically in Fig.~\ref{fig:residual}. In this experiment, an inner radius of 0.4 m and an outer radius of 0.6 m were chosen.

\begin{figure}[htb]
    \centering
    \input{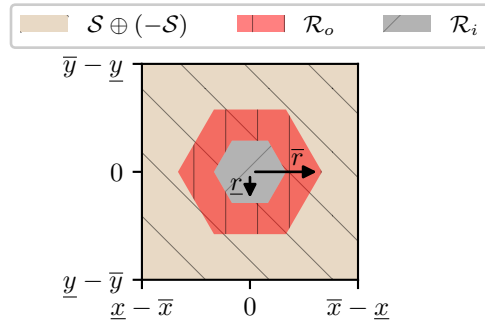}
    \caption{Hybrid zonotope representation of the feasible space for the residual states, $\rvec$. The relative position is not allowed to enter the gray hashed region as this would cause the two vehicles to collide. Package transfer from the cargo truck to the delivery vehicle is allowed when $\rvec$ is inside the red region. The beige region captures the remaining feasible space, allowing the two vehicles to be on opposite corners of the map.}
    \label{fig:residual}
\end{figure}

The MTL specification for this package delivery experiment is
\begin{equation}
    \varphi_P = \bigwedge_{i=1}^4 \eventually{\tlleft 0,N \tlright} \mathcal{G}_i \wedge \eventually{\tlleft N \tlright} \mathcal{X}_N\,,
\end{equation}
where $\mathcal{G}_i$ are the four delivery locations and $\mathcal{X}_N$ is the terminal state constraint of the lifted system. 
\subsection{Results}
The overall MIP for $N=35$ uses 1673 continuous variables and 480 binary variables, takes 82.4 seconds to compute, and yields the mission plan depicted with dashed lines in Fig.~\ref{fig:experiment}.
To conduct the experiment, this  is solved \textit{a priori} and then passed to a closed-loop path-following motion controller for each vehicle, as described in~\cite{robbins2025integration}. The solid lines depict the true path traveled by each vehicle. Fig.~\ref{fig:experiment}(a) shows the initial position of each vehicle. Figs.~\ref{fig:experiment}(b), (c), (d) and (e) show when the first, second, third, and fourth packages are exchanged from the cargo truck to the delivery vehicle, respectively.
Fig.~\ref{fig:experiment}(f) shows the final position of each vehicle.

\begin{figure*}[ht]
    \centering
    \input{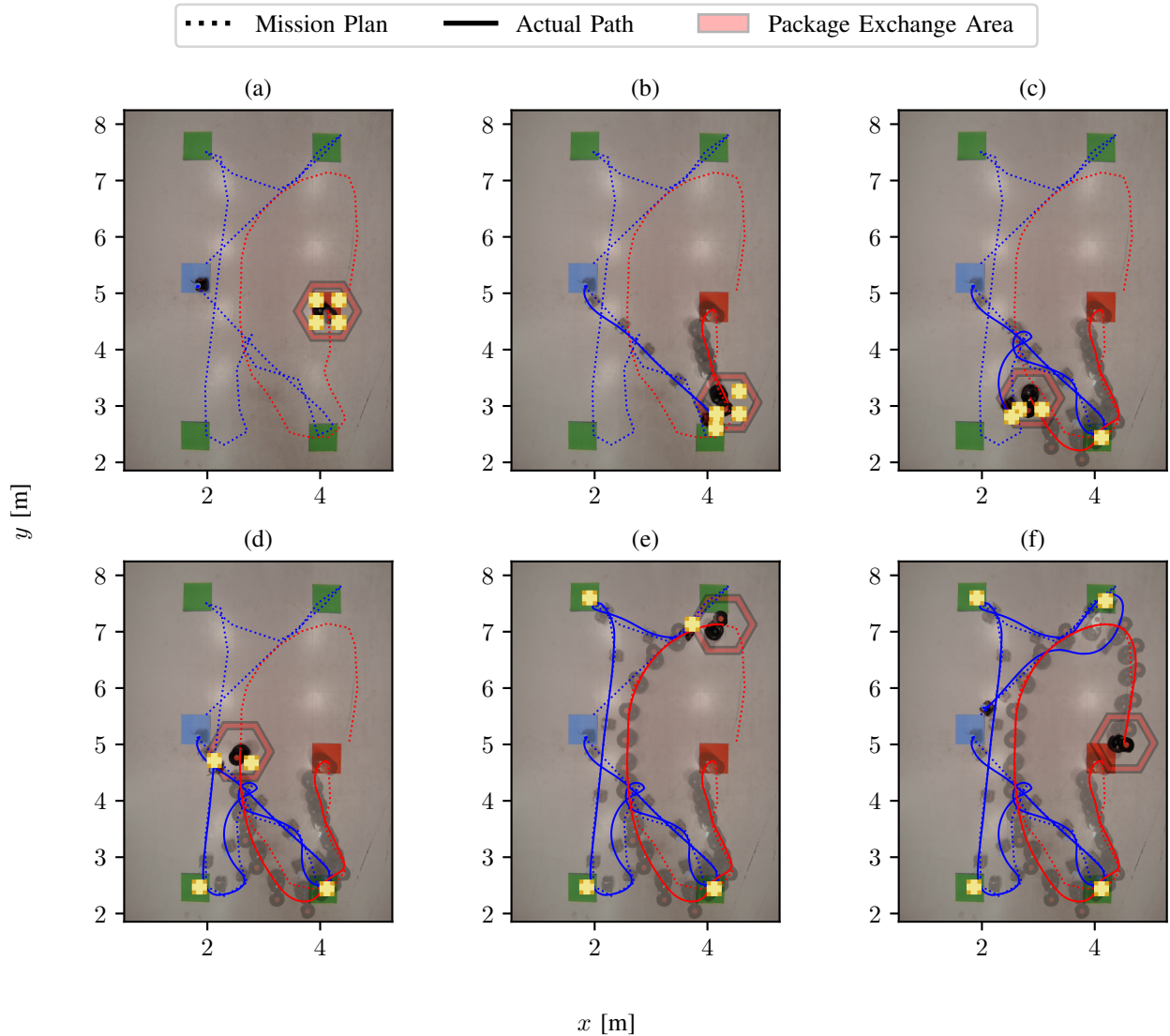}
    \caption{Experimental application of the last mile package delivery problem. The mission plan for each vehicle is shown as a dotted line while the actual path of each vehicle is shown as a solid line. The packages are represented by beige squares, delivery locations are represented by green squares, and the initial and final locations of the vehicles are represented by green and blue squares, respectively.  A video of this experiment can be found at \href{https://www.youtube.com/watch?v=oa1Wx5CMg0c}{\texttt{https://www.youtube.com/watch?v=oa1Wx5CMg0c}}.}
    \label{fig:experiment}
\end{figure*}

\section{Conclusion} \label{sec:conclusion}
This paper explores a method for constructing mixed-integer optimization problems for dynamic systems subject to MTL constraints. A combination of reachability analysis and set-based representations of key MTL operators is leveraged to construct the feasible space of satisfactory trajectories using the hybrid zonotope set representation. When benchmarked against a state-of-the-art method from the literature, the proposed method resulted in an MIP with roughly an order of magnitude fewer binary variables and a reduced computation time in most cases. Numerical examples and an experimental evaluation demonstrate the proposed method's ability to solve motion planning problems that consider time-varying environments, region-dependent disturbances, and coordinated movement of two agents. 

The primary limitation of the proposed approach is that the system can only satisfy one proposition at each time step because of how propositions are related to regions of the map. 
While the agent's position can fall within multiple regions at one time when using an NCP, the proposed representation allows only one region to be ``active'' at a time, based on its associated proposition. Future work should explore alternate ways to partition the map when overlapping regions are present, perhaps by designating the intersection of these regions with a new proposition defined to satisfy both propositions.
\section{Acknowledgements}
This work was supported by the Office of Naval Research under Award N000142512051. Any opinions, findings, and conclusions or recommendations expressed in this material are those of the authors and do not necessarily reflect the views of the Office of Naval Research.

\newpage
\bibliographystyle{ieeetr}
\bibliography{references}

\end{document}